\documentclass[10pt,a4paper]{article}

\usepackage{comment}
\usepackage{todonotes}

\usepackage{amsthm}
\usepackage{amsmath}
\usepackage{amssymb}
\usepackage{mathrsfs}
\usepackage{amsbsy}

\theoremstyle{definition}
\newtheorem{dfn}{Definition}
\newtheorem{lem}{Lemma}
\newtheorem{exm}{Example}
\newtheorem{thm}{Theorem}
\newtheorem{cor}{Corollary}
\newtheorem{prop}{Proposition}

\newtheorem{claim}{Claim}

\newcommand{\s}[1]{\mathbf{#1}}
\newcommand{\RN}{\reals^\naturals}
\newcommand{\topo}[1]{\mathcal{#1}}
\newcommand{\ball}[2]{\mathcal{B}(#1,#2)}

\newcommand{\restricted}{\upharpoonright}
\newcommand{\norm}[2]{\lVert#1\rVert_{_{#2}}}
\newcommand{\spabscon}{\mathbf{1^1}}

\newcommand{\naturals}{\mathbb{N}}
\newcommand{\reals}{\mathbb{R}}
\newcommand{\rationals}{\mathbb{Q}}

\newcommand{\mach}[1]{\mathcal{#1}}  
\newcommand{\oramach}[3]{\mathcal{#1}^{^{#2}}(#3)}  

\newenvironment{myequation}{\begin{equation}\begin{aligned}}{\end{aligned}\end{equation}}

\title{Computability and Complexity over the Product Topology of Real Numbers}
\author{Walid Gomaa $^{1,2}$ \\
\small $^1$ Egypt Japan University of Science and Technology,\\ \small Alexandria, Egypt\\
\small $^2$ Faculty of Engineering,\\
\small Alexandria University, Alexandria, Egypt \\
\small walid.gomaa@ejust.edu.eg}

\begin{document}

\maketitle

\begin{abstract}

Kawamura and Cook have developed a framework for studying the computability and complexity theoretic problems over ``large'' topological spaces. This framework has been applied to study the complexity of the differential operator and the complexity of functionals over the space of continuous functions on the unit interval $C[0,1]$. 
In this paper we apply the ideas of Kawamura and Cook to the product space of the real numbers endowed with the product topology. We show that no computable norm can be defined over such topology. We investigate computability and complexity of total functions over the product space in two cases: (1) when the computing machine submits a uniformally bounded number of queries to the oracle and (2) when the number of queries submitted by the machine is not uniformally bounded. In the first case we show that the function over the product space can be reduced to a function over a finite-dimensional space.
However, in general there exists functions whose computing machines must submit a non-uniform number of queries to the oracle indicating that computing over the product topology can not in general be reduced to computing over finite-dimensional spaces.

\end{abstract}

keywoeds: computable analysis, second-order complexity theory, product topology, oracle Turing machine

\section{Introduction}

\label{sec:Introduction}

The theory of discrete/digital computation is very well-developed and understood. All powerful enough computational models have shown to be equivalent and a rigorous complexity and algorithmic theory have been developed. As a consequence there is \textit{the unified Church-Turing thesis} that is believed to hold.
However, when considering computation over continuous spaces, such as the real and complex numbers, the situation is absolutely the contrary.
Different computational models have been proposed ranging from direct extensions of Turing machines to algebraic models to analog computation. Many of these models are radically different; and there is neither a unified accepted computational model nor a unified Church-Turing thesis. Hence, continuous computation is still under-developed and not well-understood. Let alone the under-development of a rigorous complexity theory for such kind of computation. Only in recent years began the serious work and awareness of computability over general topological and metric spaces.

Computable analysis is a computational theory developed to address computation over general spaces. It had been developed since the early days of computer science and digital computation.
Computable analysis was introduced by A. Turing in 1936 \cite{TComputable}, A. Grzegorczyk in 1955 \cite{GComputable},
and D. Lacombe in 1955 \cite{LExtension} as an extension of classical discrete computability
by enhancing the normal Turing machine with oracles that provide access to the real-valued inputs.
It is a reductionist approach where the real number is deconstructed into
some finitary representation such as Cauchy sequences. Given a function $f\colon\reals\to\reals$, computability of $f$ in this context simply means the existence of a Turing machine
that when successively fed increasingly accurate representations of $x\in\reals$,
will be able to successively output increasingly accurate representation of the function value $f(x)$. Turing machines represent a discrete-time discrete-space model of continuous computation; they are finitary objects,
hence only a countable number of real functions are computable. Computable analysis is probably the most realistic approach to
continuous computation and hence
considered as the most suitable theoretical framework for numerical algorithms.
For a comprehensive treatment of the subject, especially from the
computability perspective, see \cite{WComputable}. See 
\cite{KComplexity} for a treatment of the complexity-theoretic investigations. For an extensive review of the algebraic characterizations of computable analysis, consult 
\cite{GAlgebraic}.

An approach to computable analysis is 
Type-Two Theory of Effectivity (TTE) which enables one
to extend computability theory from discrete spaces to
many continuous spaces arising in mathematical analysis
\cite{WComputable,BHWTutorial}.
On the other side, computational complexity
theory over continuous spaces is still in its infancy.
A theory applicable to the space of real numbers has
been developed by Ko and Friedman \cite{KFComputational,KComplexity}, and has
given many results. However, this theory is not readily
extendible to “larger” spaces such as the space $C[0, 1]$ of
continuous real functions defined over the unit interval,
and a more general, abstract theory is still lacking.
First approaches have been developed by Weihrauch \cite{WComputational} on metric spaces,
and by Schr\"oder \cite{SSpaces} who argues that in
order to express computational complexity in terms of
first-order time functions (as in the discrete setting), 
one must restrict to $\sigma$-compact spaces.
Recently Kawamura and Cook \cite{KCComplexity} developed a framework applicable to the space $C[0, 1]$ (which is not $\sigma$-compact), using higher order
complexity theory and in particular second-order
polynomials. In particular their theory enables them
to prove uniform versions of older results about the
complexity of solving differential equations, as well as
new results \cite{KLipschitz,KORZComputational,FGHQuery}.

In this article we begin a line of investigation of the computability and complexity theoretic-issues over the space (and subspaces) of sequences of real number $(x_i)_{_{i\in\naturals}}$, $x_i \in \reals$. We assume the product topology over $\RN$ as a computational representation and study the computational properties 
of such topology. We show that no computable norm can be defined over $\RN$ or subspaces of it with the product topology. We study total functions defined over $\RN$ and show that any such computable function submitting a uniformally bounded number of queries is equivalent to a function over $\reals^\ell$ for some $\ell \in \naturals$.
However, this is not the case when the machine computing 
the function queries its oracle in a non-uniform way (depending on either the input precision or the input itself from $\RN$). In such a case the second-order functional defined over $\RN$ can not be reduced to a first-order function over a finite-dimensional space.

Section \ref{sec:Introduction} is an introduction.
Section \ref{sec:Basic Definitions} gives the basic definitions of the product topology over the sequence space $\RN$ and the computability and second-order complexity notions in the sense of computable analysis.
Section \ref{sec:Computational Properties of the Product Topology} proves some basic computability results about $\RN$ and some of its subspaces endowed with the product topology. Section \ref{sec:Computing Total Functions over the product topology} investigates computability and complexity of of total functions over $\RN$ in two cases: (1) when the computing machine submits a uniformally bounded number of queries to the oracle and (2) when the number of queries submitted by the machine is not uniformally bounded.


\section{Basic Definitions}

\label{sec:Basic Definitions}

Let $\RN$ denote the space of sequences of real numbers, that is, the product space $\RN = \{\iota \colon \naturals \to \reals\}$. Let $\tau$ be the product topology over 
$\RN$. 
A basic open set in that topology is:
$U = I_0 \times \cdots \times I_k \times \RN$, 
where $k \in \naturals$ and $I_j \subseteq \reals$
is an open interval. Let $\topo X_p = (\RN,\tau)$ denote this topological space. As will be shown below this topology is not normable. 
So other topologies over subsets of $\RN$ 
are typically considered. 

Let $\mathbf{1}^\infty \subseteq \RN$ be the subset of bounded sequences and consider the topology on $\mathbf{1}^\infty$ induced by the uniform metric:
$\topo X_\infty = (\mathbf{1^\infty},\tau_{d_\infty})$, where $d_\infty(\s x, \s y) = \sup_{n \in \naturals} |x_n - y_n|$. This topology can also be generated from a basis $\mathcal{B}$, where each $B \in \mathcal{B}$ is of the form:
$B = \prod_{i\in \naturals} I_i$, where $I_i = (a,b)$ for some $a,b \in \reals$; so it is the \textit{box topology} on the set of bounded intervals. It is known that this latter space is not separable (hence, it does not form a computable metric space). Another subspace is the real Hilbert space $\mathbf{1}^2$ consisting of all sequences $\s x \in \RN$ 
such that $\sum_{i \in \naturals} x_i^2$ converges.
This subspace can be equipped with the metric $d_2(\s x,\s y) = \sum_{i \in \naturals} (x_i-y_i)^2$ to form the induced topology $\topo X_2 = (\mathbf{1}^2,\tau_{d_2})$.
The third subspace is concerned with  absolute convergence, it is the set of all sequences $\s x \in \RN$ such that $\sum_{i \in \naturals} |x_i|$ converges.
This subspace can be equipped with the metric
$d_1(\s x,\s y) = \sum_{i \in \naturals} |x_i-y_i|$ to form the induced topology $\topo X_1 = (\mathbf{1}^1, \tau_{d_1})$.

Let $\Sigma = \{0,1\}$ be a binary alphabet. The length of a finite string $u \in \Sigma^*$ is denoted by $|u|$.
An oracle Turing machine $\oramach{M}{\varphi}{u}$
takes an input $u \in \Sigma^*$; during its computation it is allowed to submit queries to an oracle $\varphi\colon\Sigma^*\to\Sigma^*$. A string function $\varphi\colon\Sigma^*\to\Sigma^*$ 
is said to be \textit{regular} if for all $u,v \in \Sigma^*$ the following holds: if $|u| \le |v|$, then $|\varphi(u)| \le |\varphi(v)|$. The size of a regular function $\varphi$ is the function $|\varphi| \colon \naturals \to \naturals$ with $|\varphi|(n) = |\varphi(0^n)|$. Second-order polynomials can be defined inductively as follows:

\begin{enumerate}
\item every $n \in \naturals$ is a second-order polynomial,
\item every first-order variable $x$ is a second-order polynomial,
\item if $P$ and $Q$ are second-order polynomials, then so are $P+Q$, $PQ$, and $f(P)$ where $f$ is a second-order variable.
\end{enumerate}

For example, the following is a second-order polynomial.

\begin{myequation}
\label{eqn:second-order polynomial}
P(f,x) := f(x+2) + f(f(x).f(x)) + x^2 + 4
\end{myequation}

A second-order polynomial $P$ can be viewed as an polynomial integer functional that essentially maps a function $f\colon\naturals\to\naturals$ to a function $P(f)\colon \naturals\to\naturals$.
For example, in Equation \eqref{eqn:second-order polynomial} let $f$ be defined as $f(x) = x^3$, then

\begin{myequation}
P(f,n)(x) = (x+2)^3 + (x^3.x^3)^3 + x^2 + 4
\end{myequation}

An oracle Turing machine $\mach{M}$ runs in polynomial time if there exists a second-order polynomial $P(X,Y)$
such that for any regular function $\varphi$ and any string $u \in \Sigma^*$, $\oramach{M}{\varphi}{u}$ 
halts in at most $P(|\varphi|,|u|)$ steps.
For more detailed information about representation by regular functions and second-order complexity theory the interested reader should consult \cite{KCComplexity}.
A sequence $\s x \in \RN$ can be represented by a regular function $\varphi_{\s x}\colon \Sigma^*\to\Sigma^*$, defined by:

\begin{myequation}
& \varphi_{\s x}(0^i,0^j) = 0^i\circ\alpha \\
& |[\alpha] - x_i| \le 2^{-j}
\end{myequation}

where $\alpha$ is a valid encoding of a rational number $[\alpha]$. (Notice that $\varphi$ can be assumed to take only one argument by the use of a proper pairing function.)

\begin{dfn}
A \textit{computable metric space} is a triple $(X,\mathcal{S},d)$ where:
\begin{enumerate}
\item $(X,d)$ is a separable metric space,
\item $\mathcal{S} = \{s_i \colon i \in \naturals\}$ is a countable dense subset of $X$, 
\item $d\colon X^2 \to \reals$ is computable.
\end{enumerate}
Similarly, we can define lower-semi computable and upper semi-computable metric spaces according to whether $d$ is lower-semi computable or upper semi-computable respectively.
\end{dfn}

\begin{prop}
The metric spaces $(\mathbf{1}^1,d_1)$ and $(\mathbf{1}^2,d_2)$ are lower semi-computable.
\end{prop}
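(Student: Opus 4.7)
The plan is to construct, for each pair of input sequences $\s x, \s y$ in $\mathbf{1}^1$ (respectively $\mathbf{1}^2$), a computable non-decreasing sequence of rational lower bounds converging to $d_1(\s x, \s y)$ (respectively $d_2(\s x, \s y)$); this is precisely what lower semi-computability of the distance requires. The key observation that makes this much easier than full computability is that both series $\sum_i |x_i - y_i|$ and $\sum_i (x_i - y_i)^2$ have non-negative summands, so every finite partial sum is automatically a lower bound on the infinite total. No a priori control on the tail is available purely from the oracle representation, which is why the distances are not expected to be upper semi-computable; but the one-sided estimate suffices here.

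For $d_1$, given the regular oracles $\varphi_{\s x}, \varphi_{\s y}$ and a target output precision parameter $k$, the algorithm picks some $N = N_k$ with $N_k \to \infty$, queries both oracles at indices $0, 1, \ldots, N-1$ with precision $2^{-(k + \lceil \log_2 N \rceil + 2)}$, obtains rationals $\tilde x_i, \tilde y_i$, and outputs $S_N - 2^{-k-1}$ where $S_N := \sum_{i=0}^{N-1} |\tilde x_i - \tilde y_i|$. By the $1$-Lipschitz property of $|\cdot|$ in each argument, $S_N$ differs from the exact finite partial sum $\sum_{i=0}^{N-1} |x_i - y_i|$ by at most $2^{-k-1}$, so the output is a genuine rational lower bound on $d_1(\s x, \s y)$. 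As $k$ (and hence $N_k$) grows, the exact partial sums already converge to $d_1(\s x, \s y)$ and the approximation error vanishes, so the computed rationals converge to the true value from below.

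For $d_2$ one extra step is needed, because the squaring map is not globally Lipschitz. First I would query each coordinate with precision $1$ to obtain a crude integer bound $M_i$ on $|x_i - y_i|$, and then refine with precision chosen so that, via the identity $|a^2 - b^2| \le (|a| + |b|)\,|a - b|$, the per-coordinate squared-term error is at most $2^{-k-1}/N$; this amounts to a precision of order $2^{-k}/(N M_i)$, which is computable from $M_i$ and $N$. Summing gives a finite-sum approximation error of at most $2^{-k-1}$, and the remainder of the argument is identical to the $d_1$ case.

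The main obstacle is essentially bookkeeping: one must choose query precisions carefully enough that the cumulative finite-sum approximation error stays below the target tolerance (slightly more delicate in the $d_2$ case owing to the non-Lipschitz squaring), and one must argue convergence of the exact partial sums to the infinite sum, which is immediate from $\s x - \s y \in \mathbf{1}^1$ or $\mathbf{1}^2$ respectively. No conceptual difficulty beyond that arises, which is consistent with the fact that only a one-sided (lower) bound on the distance is being claimed.
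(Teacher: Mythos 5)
Your treatment of the lower semi-computability of the distance functions is essentially the paper's argument --- approximate $d_1(\s x,\s y)$ (resp.\ $d_2$) from below by finite partial sums, which are automatically lower bounds because the summands are non-negative, and which converge to the true value --- and you actually work out the oracle-precision bookkeeping (including the non-Lipschitz squaring in the $d_2$ case) more carefully than the paper, which simply asserts that the truncated sum $d_{\restricted M}=\sum_{i\le M}|x_i-y_i|$ is computable. That half of your proposal is fine.

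The gap is that you prove only one of the conditions in the paper's definition. In this paper a (lower semi-)computable metric space is a \emph{triple} $(X,\mathcal{S},d)$: besides the (lower semi-)computability of $d$, one must show that $(X,d)$ is separable and exhibit a countable dense subset $\mathcal{S}$. Your proposal never addresses this, and it is not a vacuous requirement in context --- the paper explicitly points out that $(\mathbf{1^\infty},d_\infty)$ fails to be separable and therefore is not a computable metric space, so separability is exactly what distinguishes $\mathbf{1}^1$ and $\mathbf{1}^2$ here. The paper spends half its proof on this: take $\mathcal{S}$ to be the eventually-zero sequences with rational entries, and for given $\s x\in\mathbf{1}^1$ and $\epsilon>0$ choose $N$ with $\sum_{i\ge N}|x_i|<\epsilon/2$ and rationals $q_i$ with $|q_i-x_i|<\epsilon/(2N)$ for $i<N$, giving $d_1(\s s,\s x)<\epsilon$ (and similarly for $d_2$). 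Without supplying this dense set and the density argument, the statement as defined in the paper is not established.
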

\begin{proof}  
We first show that these spaces are separable.
Let $\mathcal{S} \subseteq \mathbf{1}^1$ be the set of sequences 
that are eventually zero and all of whose entries are in $\rationals$.
It is clear that $\mathcal{S}$ is countable, so remains to show that $\mathcal{S}$ is dense. Assume 
a point $\s x \in \mathbf{1}^1$ and $\epsilon > 0$.
We show that there exists some $\s s \in \mathcal{S}$ such that $d_1(s,x) < \epsilon$. Let $N \in \naturals$ be such that
$\sum _{i \ge N} |x_i| < \frac{\epsilon}{2}$.
For every $i=0,\ldots,N-1$, let $q_i \in \rationals$ be such that $|q_i - x_i| < \frac{\epsilon}{2N}$. 
Let $\s s \in \RN$ be such that 
$s_i = q_i$ for $i < N$, and zero otherwise; obviously, $\s s \in \mathcal{S}$.
Furthermore, $d_1(\s s,\s x) = \sum_{i \in \naturals} |s_i - x_i| = \sum_{i < N} |s_i - x_i| + \sum_{i \ge N} |s_i - x_i| = \sum_{i < N} |q_i - x_i| + \sum_{i \ge N} |x_i| < N \frac{\epsilon}{2N} + \frac{\epsilon}{2} = \epsilon$.
The proof for the space $\topo X_2$ is ver similar using the same countable dense set $\mathcal{S}$.
Assume $\s x, \s y \in \RN$. Fix $M \in \naturals$ and let $d_{\restricted M} = \sum_{i\le M} |x_i - y_i|$. It is clear that the function $d_{\restricted M}$ is computable, $d_{\restricted M} \le \sum_{i \in \naturals} |x_i-y_i| = d_1(\s x,\s y)$, and 
$d_{\restricted M} \to d_1(\s x,\s y)$ as $M \to \infty$. Hence, $d_1$ is lower semi-computable. Similarly for $d_2$.        
\end{proof}


\section{Computational Properties of the Product Topology}

\label{sec:Computational Properties of the Product Topology}

In this section we show some basic topological properties of the product topology $\topo X_p$ and its subspaces.
The following proposition shows that there is no
norm on $\RN$ that induces the product topology.

\begin{prop}
\label{prop: product topology not normable}
$\topo X_p$ is not normable.
\end{prop}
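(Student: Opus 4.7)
The plan is to assume, for contradiction, that some norm $\|\cdot\|$ on $\RN$ induces the product topology $\tau$, and to derive a nonzero vector of zero norm, contradicting positive-definiteness. The key observation driving the argument is that in the product topology every neighborhood of the origin leaves all but finitely many coordinates completely unconstrained, so no such neighborhood can be norm-bounded.

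First I would consider the open unit ball $B = \{\s x \in \RN : \|\s x\| < 1\}$, which by assumption is $\tau$-open and contains $\s 0$. By the description of the basis of $\tau$ recalled at the start of Section \ref{sec:Basic Definitions}, there exist $k \in \naturals$ and open intervals $I_0,\ldots,I_k \subseteq \reals$ with $0 \in I_j$ for each $j \le k$ such that
\[
I_0 \times \cdots \times I_k \times \RN \;\subseteq\; B.
\]

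Next I would exploit the unconstrained tail. Let $\s e_{k+1} \in \RN$ denote the sequence whose $(k+1)$-th coordinate equals $1$ and whose other coordinates vanish. For every $t \in \reals$, the scaled vector $t\,\s e_{k+1}$ has all its first $k+1$ coordinates equal to $0$, which lies in each $I_j$, and therefore $t\,\s e_{k+1} \in B$. This yields $|t|\cdot\|\s e_{k+1}\| < 1$ for every $t \in \reals$, which forces $\|\s e_{k+1}\| = 0$. Since $\s e_{k+1} \ne \s 0$, this contradicts positive-definiteness and concludes the proof.

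I do not anticipate a substantive obstacle: the argument is essentially Kolmogorov's normability criterion specialised to $\topo X_p$, and the only choice to make is the witness $\s e_{k+1}$, which is dictated by the index $k$ produced in the first step. The one mildly subtle point is recognising that it is the \emph{tail} factor $\RN$ of the basic open set that breaks boundedness; any coordinate beyond position $k$ would serve equally well in place of $k+1$, and indeed the same argument shows that $B$ contains the infinite-dimensional subspace $\{0\}^{k+1} \times \RN$, a far stronger obstruction than the mere existence of one unbounded line.
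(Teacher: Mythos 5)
Your proof is correct and follows essentially the same route as the paper's: both extract a basic open set $I_0\times\cdots\times I_k\times\RN$ inside the unit ball and observe that the unconstrained tail puts an entire coordinate axis (all multiples of $\s e_{k+1}$) inside the ball. The only cosmetic difference is the final contradiction — you conclude $\|\s e_{k+1}\|=0$ against positive-definiteness, while the paper scales the same vector to norm $2$ and contradicts its membership in the unit ball.
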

\begin{proof}
Assume that there exists a norm $\norm{\cdot}{}$ that
induces the product topology $\topo X_p$. Consider the unit ball $B = \{\s x \in \RN \colon \norm{x}{} < 1\}$. Then $\s 0 \in B$ and $B \in \tau$ ($B$ 
is open in the product topology).
There exists a constant $k \in \naturals$ such that $\{0\}^k \times \RN \subseteq B$.
Let $\s x$ be a sequence such that $x_{k+1} = 1$ and $0$ otherwise. Then $\s x \in B$ and so is every sequence $t\s x$ for any constant $t \in \reals$.
Let $t = 2/\norm{\s x}{}$, then $\norm{t \s x}{} = 
|t| \cdot  \norm{\s x}{} = 2$ which contradicts the fact that
$t\s x \in B$.
\end{proof}

Even though $\topo{X}_p$ is not normable it is metrizable.
Let $d$ be the usual metric on $\reals$: $d(x,y) = |x-y|$ for 
$x,y \in \reals$. Let $\bar{d}$ be the \textit{standard bounded metric} on $\reals$: $\bar{d}(x,y) = \min\{d(x,y),1\}$. Define the metric $D$ on $\RN$ as follows:

\begin{myequation}
\label{eqn:metric inducing the product topology on RN}
D(\s x, \s y) = \sup_{n \in \naturals}\left\{\frac{\bar{d}(x_n,y_n)}{n}\right\} ,\qquad \s x, \s y \in \RN
\end{myequation}

The following theorem from \cite{PFoundations} states that the metric $D$ induces the product topology over $\RN$.

\begin{thm}[Theorem 2.50 in \cite{PFoundations}]
\label{thm:product topology on RN is metrizable}
The metric $D$ defined in Eq. \eqref{eqn:metric inducing the product topology on RN} induces $\topo{X}_p$.
\end{thm}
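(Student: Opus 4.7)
My plan is to check that $\tau_D = \tau$ (where $\tau_D$ is the topology induced by $D$) by establishing mutual refinement of local bases at each point $\s x \in \RN$: I will show that every $D$-ball around $\s x$ contains a product-basic neighborhood of $\s x$, and conversely every product-basic neighborhood of $\s x$ contains a $D$-ball around $\s x$. (Throughout I treat the supremum defining $D$ as ranging over $n \ge 1$, as required for the divisor to make sense.)

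For the inclusion $\tau \subseteq \tau_D$, fix a product-basic neighborhood $U = I_1 \times \cdots \times I_k \times \RN$ of $\s x$. For each $1 \le n \le k$ pick $\delta_n \in (0,1]$ with $(x_n - \delta_n, x_n + \delta_n) \subseteq I_n$, and set $\epsilon = \min\{\delta_n/n : 1 \le n \le k\}$. If $D(\s x, \s y) < \epsilon$, then in particular $\bar d(x_n, y_n) < n\epsilon \le \delta_n \le 1$ for every $n \le k$; since a value of $\bar d$ strictly below $1$ coincides with $|x_n - y_n|$, we obtain $|x_n - y_n| < \delta_n$, and hence $y_n \in I_n$. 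This gives $\ball{\s x}{\epsilon} \subseteq U$.

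For the reverse inclusion $\tau_D \subseteq \tau$, fix $\s x \in \RN$ and $\epsilon > 0$. The key observation is that the weight $1/n$ annihilates the tail: $\bar d(x_n, y_n)/n \le 1/n$, which is at most $\epsilon$ as soon as $n \ge 1/\epsilon$. So I pick an integer $N \ge 1/\epsilon$ and define the product-basic neighborhood $U = \prod_{n=1}^{N}(x_n - n\epsilon, x_n + n\epsilon) \times \prod_{n > N}\reals$. Any $\s y \in U$ then satisfies $\bar d(x_n, y_n)/n \le |x_n - y_n|/n < \epsilon$ for $n \le N$ and $\bar d(x_n, y_n)/n \le 1/n \le \epsilon$ for $n > N$, so $D(\s x, \s y) \le \epsilon$; using $\epsilon/2$ in place of $\epsilon$ in the construction produces a strict inequality and thus $U \subseteq \ball{\s x}{\epsilon}$.

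The argument is essentially routine: no step requires real insight beyond the standard truncation trick, which leans on the boundedness of $\bar d$ together with the vanishing of the weights $1/n$ so that only finitely many coordinates matter for $D$ up to any prescribed precision. The one small technical point is arranging $\delta_n \le 1$ in the first direction, which lets a $\bar d$-bound be converted back into a genuine $|\cdot|$-bound on $\reals$.
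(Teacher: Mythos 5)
Your proof is correct. Note that the paper itself offers no argument here: it imports the statement as Theorem 2.50 of \cite{PFoundations}, so there is no internal proof to compare against. What you give is the standard textbook argument (mutual refinement of neighborhood bases: bound the finitely many weighted coordinates to fit a $D$-ball inside a basic product set, and use the boundedness of $\bar d$ together with the decay of the weights $1/n$ to trap the tail when fitting a basic product set inside a $D$-ball), and all the details check out, including the conversion of a $\bar d$-bound below $1$ into a genuine $|\cdot|$-bound and the $\epsilon/2$ device to turn the non-strict estimate into membership in the open ball. Your remark that the supremum must be taken over $n\ge 1$ (equivalently, that sequences should be indexed from $1$, or the weight replaced by $1/(n+1)$) is a legitimate repair of the formula as printed in Eq.~\eqref{eqn:metric inducing the product topology on RN}, since with $n=0$ allowed the expression is undefined, and dropping the $0$th coordinate from the supremum would make $D$ fail to be a metric and fail to see that coordinate at all.
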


The following proposition shows that any norm defined over $\RN$ induces a topology that is incomparable with the product topology.

\begin{prop}
\label{prop: norm induces topology over RN incomparable to the product topology}
Let $\norm{\cdot}{}$ be a norm defined over $\RN$ and let $\tau'$ be the topology induced by that norm. Then neither $\tau'$ is weaker nor stronger than $\tau$. 
\end{prop}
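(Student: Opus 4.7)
The plan is to establish the two non-inclusions $\tau' \not\subseteq \tau$ and $\tau \not\subseteq \tau'$ separately.

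For $\tau' \not\subseteq \tau$, I would essentially replay the argument from the proof of Proposition \ref{prop: product topology not normable}. The unit norm ball $B = \{\s x \in \RN : \norm{\s x}{} < 1\}$ is in $\tau'$ by definition, so if it also lay in $\tau$ there would exist a basic product-open set $I_0 \times \cdots \times I_k \times \RN$ containing $\s 0$ and contained in $B$. Then every scalar multiple $t\,e_{k+1}$, where $e_{k+1}$ denotes the sequence with $1$ in position $k+1$ and zeros elsewhere, would belong to $B$, contradicting the fact that $\norm{t\,e_{k+1}}{} = |t|\,\norm{e_{k+1}}{}$ grows without bound as $|t| \to \infty$; note that positive definiteness forces $\norm{e_{k+1}}{} > 0$.

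For $\tau \not\subseteq \tau'$ I would argue by contradiction, assuming every product-open set is norm-open. In particular, each coordinate slab $U_n = \{\s x \in \RN : |x_n| < 1\}$, being the $\tau$-preimage of $(-1,1)$ under the projection $\phi_n(\s x) = x_n$, would be a norm-neighborhood of $\s 0$. Hence there would exist $\delta_n > 0$ with $\{\s x : \norm{\s x}{} < \delta_n\} \subseteq U_n$, and the standard homogeneity argument upgrades this to a global estimate $|x_n| \le C_n\,\norm{\s x}{}$ for some $C_n > 0$, valid for every $\s x \in \RN$ and every $n \in \naturals$. At that point I would invoke a diagonal construction: define $\s x^\ast \in \RN$ coordinate-wise by $x^\ast_n := n\,C_n$. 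Since $\RN$ imposes no summability or boundedness restriction on its members, $\s x^\ast$ is a genuine element of $\RN$, so $\norm{\s x^\ast}{}$ is a finite real number; but the coordinate bound forces $\norm{\s x^\ast}{} \ge |x^\ast_n|/C_n = n$ for every $n$, which is the desired contradiction.

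The crucial step, and the only one that requires any care, is the diagonal construction in the second part: it exploits the fact that once the norm topology is assumed finer than the product topology, every coordinate projection becomes a bounded linear functional, whereupon the total unrestrictedness of $\RN$ lets one assemble a sequence whose coordinates outgrow any prescribed family of bounds. Everything else is bookkeeping around linearity and the same elementary observation already used in Proposition \ref{prop: product topology not normable}, namely that basic $\tau$-neighborhoods of $\s 0$ place no restriction on the tail coordinates.
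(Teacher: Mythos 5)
Your proposal is correct and follows essentially the same route as the paper: the first non-inclusion replays the argument of Proposition \ref{prop: product topology not normable} on the unit norm ball, and the second derives, from the norm-openness of the coordinate slabs $\{|x_n|<1\}$, a bound $|x_n|\le C_n\norm{\s x}{}$ (the paper's Claims \ref{claim: no stronger than the product topology} and \ref{claim: relationship between norm and the individual components}) and then contradicts it with the diagonal sequence $x^\ast_n = nC_n$, exactly as the paper does with $z_i = \tfrac{2}{c_i}i$.
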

\begin{proof}
Assume $\tau'$ is weaker than $\tau$. Consider the unit ball $B = \{\s x \in \RN \colon \norm{x}{} < 1\}$.
Then $B \in \tau'$ and hence, by assumption, 
$B \in \tau$. By an argument similar to that in Proposition \ref{prop: product topology not normable}
we can show that this is impossible. This proves  
$\tau'$ is not weaker than $\tau$.
Now assume $\tau'$ is stronger than $\tau$ ($\tau'\supseteq \tau$).
\begin{claim}
\label{claim: no stronger than the product topology}
For all $i \in \naturals$, there exists a constant $c_i$ such that for any sequence $\s x$ the following holds:
if $\norm{\s x}{} < c_i$, then $|x_i| < 1$.
\end{claim}
\begin{proof}
Assume the claim does not hold. 
Then 

\begin{myequation}
\label{eqn: incomparable topology}
\exists i \in \naturals:\forall c > 0: \exists \s x \in \RN: \norm{\s x}{} < c \; and \; |x_i| \ge 1
\end{myequation}

Consider the set $A = \{\s y \in \RN \colon |y_i| < 1\}$. Then $A$ is open in the product topology, that is, $A \in \tau$, and hence, $A \in \tau'$.
Since 
$\s 0 \in A$, there exists $\epsilon > 0$ such 
that $\ball{\s 0}{\epsilon} \subseteq A$.
This means that $A$ contains all sequences  $\s z$ with $\norm{\s z}{} < \epsilon$, so we have $$ \forall \s z \in \RN: \norm{\s z}{} < \epsilon \; implies \; |z_i| < 1$$
However ,by replacing $c$ with $\epsilon$, this 
contradicts Eq. \eqref{eqn: incomparable topology}.
\end{proof}
\begin{claim}
\label{claim: relationship between norm and the individual components}
For all $\s x \in \RN$ and for all $i \in \naturals$ 
we have $\norm{\s x}{} > \frac{c_i}{2} |x_i|$. 
\end{claim}
\begin{proof}
Assume the contrary. Then there exists $i \in \naturals$ and $\s x \in \RN$ such that $\norm{x}{} \le \frac{c_i}{2} |x_i|$. 
Let $\beta$ be a positive number such that $\frac{c_i}{2 \norm{\s x}{}} \le \beta < \frac{c_i}{\norm{\s x}{}}$. Let $\s y = \beta \s x$, then $\norm{\s y}{} = \beta \norm{\s x}{} < \frac{c_i}{\norm{\s x}{}} \norm{\s x}{} = c_i$, 
so 
$\norm{\s y}{} < c_i$. We have by assumption

\begin{myequation}
& \norm{\s x}{} \le \frac{c_i}{2} |x_i| \\
& \beta \norm{\s x}{} \le \frac{c_i}{2} \beta |x_i| \\
& \norm{\beta \s x}{} \le \frac{c_i}{2} |\beta x_i| \\
& \norm{\s y}{} \le \frac{c_i}{2} |y_i| \\
& |y_i| \ge \frac{2}{c_i} \norm{\s y}{} = \frac{2}{c_i} \beta \norm{\s x}{} \ge \frac{2}{c_i} \frac{c_i}{2 \norm{\s x}{}} \norm{\s x}{} = 1
\end{myequation}

Hence, we have a sequence $\s y \in \RN$ such that 
$\norm{\s y}{} < c_i$, however, $|y_i| \ge 1$ which contradicts Claim \ref{claim: no stronger than the product topology}.

\end{proof}

Now define a sequence $\s z \in \RN$ as follows: for every $i \in \naturals$, let $z_i = \frac{2}{c_i} i$.
Then by Claim \ref{claim: relationship between norm and the individual components} we have:
$\norm{\s z}{} > \frac{c_i}{2} \frac{2}{c_i} i$, that is, $\norm{\s z}{} > i$ for every $i$ which is impossible. Hence,
$\tau'$ is not stronger than $\tau$.  

\end{proof}

The previous two propositions imply that no continuous, and hence computable, norm can be defined on $(\RN,\tau)$. 

\begin{cor}~
\label{cor:no continuous and computable norm over RN with the product topology}
\begin{enumerate}
\item No continuous norm can be defined over $(\RN,\tau)$, where $\tau$ is the product topology.
\item No computable norm can be defined over $(\RN,\tau)$.
\end{enumerate}
\end{cor}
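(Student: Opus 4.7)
The plan is to derive both statements directly from Proposition \ref{prop: norm induces topology over RN incomparable to the product topology}, which has already ruled out either inclusion between $\tau$ and any norm-induced topology $\tau'$. The only additional input needed is the standard observation that continuity of $\norm{\cdot}{}$ with respect to $\tau$ forces $\tau' \subseteq \tau$; since Proposition \ref{prop: norm induces topology over RN incomparable to the product topology} explicitly forbids $\tau'$ from being weaker than $\tau$, a contradiction follows.

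For item (1), I would argue by contradiction. Suppose $\norm{\cdot}{}$ is a norm on $\RN$ that is $\tau$-continuous and let $\tau'$ be the topology it induces. Since addition on $\RN$ is coordinatewise and each coordinate projection is $\tau$-continuous, the translation map $\s x \mapsto \s x - \s x_0$ is $\tau$-continuous for any fixed $\s x_0 \in \RN$. Composing with $\norm{\cdot}{}$, the real-valued map $\s x \mapsto \norm{\s x - \s x_0}{}$ is then $\tau$-continuous, and consequently every open ball $\{\s x \colon \norm{\s x - \s x_0}{} < \epsilon\}$ is the preimage of $(-\infty,\epsilon)$ under a $\tau$-continuous map, hence lies in $\tau$. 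Because such balls form a basis of $\tau'$, we conclude $\tau' \subseteq \tau$, i.e., $\tau'$ is weaker than $\tau$, contradicting Proposition \ref{prop: norm induces topology over RN incomparable to the product topology}.

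For item (2), I would invoke the standard fact of TTE-based computable analysis that every computable function between represented spaces is continuous with respect to the induced final topologies (see \cite{WComputable}); in particular, a computable norm would be $\tau$-continuous, so item (1) immediately applies. The only mildly delicate point of the argument is justifying the passage ``$\norm{\cdot}{}$ continuous $\Rightarrow \tau' \subseteq \tau$'', but this is routine topological vector space bookkeeping and constitutes no real obstacle, since Proposition \ref{prop: norm induces topology over RN incomparable to the product topology} has already absorbed the genuine work.
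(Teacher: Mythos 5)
Your proposal is correct and follows essentially the route the paper intends: the corollary is stated as an immediate consequence of Propositions \ref{prop: product topology not normable} and \ref{prop: norm induces topology over RN incomparable to the product topology}, with the unstated steps being exactly the ones you supply --- $\tau$-continuity of the norm forces the norm balls to be $\tau$-open, hence $\tau' \subseteq \tau$, contradicting the ``not weaker'' half of Proposition \ref{prop: norm induces topology over RN incomparable to the product topology}, and computability implies continuity with respect to the representation used. Your filling-in of the translation-invariance detail and the explicit appeal to the computable-implies-continuous principle is a faithful elaboration rather than a different argument.
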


Let $\s x \in \RN$ and let $\varphi_{\s x}$ be a representation of $\s x$. Let $\mach{M}$ be an oracle Turing machine computing over $\RN$.
Each query submitted by $\oramach{M}{\varphi_{\s x}}{n}$ is a pair of non-negative integers $\left<i,j\right>$ and $\varphi_{\s x}(\left<i,j\right>)$ returns a rational number $r$ such that 
$|r - x_i| \le 2^{-j}$.
The following proposition shows that no computable norm can be defined over the absolutely convergent
sequences $\spabscon$ with the subspace topology.

\begin{prop}
\label{prop:no computable norm over space of abs convergent sequences}
There is no computable norm over the subspace topology $(\spabscon,\tau_{_\spabscon})$.
\end{prop}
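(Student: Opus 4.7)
The plan is to exploit the fact that any halting oracle computation submits only finitely many queries, combined with the norm identity $\|\alpha\, \s e_i\| = |\alpha|\cdot\|\s e_i\|$, to exhibit two inputs the machine cannot distinguish but whose norms differ by an arbitrarily large amount. Suppose toward a contradiction that some oracle Turing machine $\mach{M}$ computes a norm $\|\cdot\|$ on $\spabscon$ under the coordinate-wise representation. First I would feed $\mach{M}$ the zero sequence $\s 0 \in \spabscon$, represented by the oracle $\varphi_{\s 0}$ that answers every query $\left<i,j\right>$ with the rational $0$, on input $0^n$ for a precision parameter $n$ to be fixed later. The run halts after finitely many queries $\left<i_1,j_1\right>,\dots,\left<i_k,j_k\right>$ and outputs a rational $r$ with $|r| = |r - \|\s 0\|| \le 2^{-n}$. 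Let $S = \{i_1,\dots,i_k\}$ and pick any index $i_0 \notin S$.

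Next I would construct a second input that is indistinguishable from $\s 0$ to $\mach{M}$: take $\s x' = \alpha\, \s e_{i_0}$, which lies in $\spabscon$ because it is finitely supported, and define an oracle $\varphi_{\s x'}$ that answers every query $\left<i,j\right>$ with $0$ whenever $i \neq i_0$ (valid since $x'_i = 0$) and with an approximation of $\alpha$ when $i = i_0$. By induction on the computation steps of $\mach{M}$, since at each stage the next query is a deterministic function of the oracle answers so far and those answers coincide with the run under $\varphi_{\s 0}$ (no query involves $i_0$), the two runs submit the identical query transcript and $\mach{M}$ outputs the same $r$. Correctness on $\s x'$ then requires $|r - \|\s x'\|| \le 2^{-n}$, which combined with $|r| \le 2^{-n}$ gives $\|\s x'\| \le 2^{1-n}$.

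On the other hand, $\s e_{i_0} \neq \s 0$, so the norm axioms force $c := \|\s e_{i_0}\| > 0$, and hence $\|\s x'\| = |\alpha|\, c$. Choosing $n$ and $\alpha$ so that $|\alpha|\, c > 2^{1-n}$ (for instance $\alpha = 1$ and $n$ large enough that $2^{1-n} < c$) contradicts the bound from the previous paragraph. The step I expect to require the most care is the inductive argument that the two runs of $\mach{M}$ are truly step-for-step identical; one must formalise that a halting oracle transcript is determined solely by the answers to the queries actually issued, and then verify that by construction no query in the original run asks about coordinate $i_0$, so the switch of oracle from $\varphi_{\s 0}$ to $\varphi_{\s x'}$ is invisible to $\mach{M}$.
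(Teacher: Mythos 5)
Your proposal is correct and follows essentially the same route as the paper: run the machine on the zero sequence with the all-zero oracle, pick a coordinate beyond the finitely many queried ones, and use homogeneity of the norm to place a large multiple of a unit vector there that the machine cannot distinguish from $\s 0$ (the paper's sequences $\s{y_\ell}$ with $\ell\to\infty$ play the role of your $\alpha\,\s e_{i_0}$ with $\alpha$ large). One small caution: the parenthetical instantiation ``$\alpha=1$ and $n$ large'' is circular, since $i_0$ and hence $c=\norm{\s e_{i_0}}{}$ depend on $n$ through the query set; fix $n$ first and scale $\alpha$, as in your main argument and in the paper.
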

\begin{proof}
Assume that there exists such a norm $F$. Let $\mach{M}$
be an oracle Turing machine computing $F$. Consider 
the zero sequence $\s 0$. Then there exists an
oracle $\varphi_{\s 0}$ for $\s 0$ that would always answer zero to any query. 
Let $n \in \naturals$ be arbitrary and let
$N$ be the number of queries submitted by 
$\oramach{M}{\varphi_{\s 0}}{n}$. For $i=1,\ldots, N$, let
$<k_i,p_i>$ be the queries submitted by $\oramach{M}{\varphi_{\s 0}}{n}$.
Since $F$ is a norm, $F(\s 0) = 0$, and hence $\oramach{M}{\varphi_{\s 0}}{n}$ would output some 
positive number $\epsilon$ with $\epsilon \le 2^{-n}$. Let $k = \max_{i} k_i$ and define 
$A = \{\s x \in \spabscon \colon x_0=\cdots = x_k = 0\}$. For any $\s x \in A$, let $\varphi_{\s x}$ be a representation of $\s x$ such that $\varphi_{\s x}(<i,p>) = 0$ for any $i \le k$. So we have $\oramach{M}{\varphi_{\s x}}{n} = \oramach{M}{\varphi_{\s 0}}{n} = \epsilon$.
This implies that $F(\s x) \le 2^{-n+1}$ for all $\s x \in A$.
For every $\ell \in \naturals$ define the sequence $\s{y_\ell}$ such that $y_{\ell,k+1} = \ell$ and $0$ otherwise. Clearly, $\s{y_\ell} \in A$, so $F(\s{y_\ell}) \le 2^{-n+1}$.
We have $\s{y_\ell} = \ell \s{y_1}$, 
so $F(\s{y_\ell}) = \ell F(\s{y_1})$. $F(\s{y_1})$ is constant, so $F(\s{y_\ell})$ can be arbitrarily large with increasing $\ell$ which contradicts
the fact that $F(\s{y_\ell})$ is bounded by $2^{-n+1}$.
\end{proof}

The argument in the previous proposition can be generalized to the subspaces 
$\mathbf{1^\infty}$ and $\mathbf{1^2}$.

\begin{cor}~
\label{cor:computable norm over spaces 12 and 1infty}
\begin{enumerate}
\item There is no computable norm over the subspace topology $(\mathbf{1^\infty},\tau_{_\mathbf{1^\infty}})$.
\item There is no computable norm over the subspace topology $(\mathbf{1^2},\tau_{_\mathbf{1^2}})$.
\end{enumerate}
\end{cor}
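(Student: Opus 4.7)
The plan is to reuse, essentially verbatim, the strategy of Proposition~\ref{prop:no computable norm over space of abs convergent sequences}, observing that it only ever manipulates sequences that are eventually zero, and such sequences belong to each of $\spabscon$, $\mathbf{1^2}$, and $\mathbf{1^\infty}$. So the same construction serves all three subspaces without modification.

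First, I would fix the subspace in question (either $\mathbf{1^\infty}$ or $\mathbf{1^2}$), assume for contradiction a computable norm $F$ witnessed by an oracle Turing machine $\mach M$, and feed $\mach M$ the input $n$ together with an oracle $\varphi_{\s 0}$ that always replies $0$. Since $\mach M$ halts, it submits only finitely many queries, touching coordinates among $\{0,\ldots,k\}$ for some $k$ depending on $n$, and it outputs an approximation $\epsilon \le 2^{-n}$ to $F(\s 0)=0$.

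Next, for any sequence $\s x$ in the subspace that vanishes on coordinates $0,\ldots,k$, I would supply an oracle $\varphi_{\s x}$ answering $0$ on the queries $\langle i,p\rangle$ with $i \le k$; the machine's trace is identical to the trace on $\varphi_{\s 0}$, yielding $F(\s x) \le 2^{-n+1}$. Finally, I would take the eventually-zero witness $\s{y_\ell}$ having $y_{\ell,k+1}=\ell$ and $0$ elsewhere. This sequence lies in $\mathbf{1^\infty}$ (it is bounded by $\ell$) and in $\mathbf{1^2}$ (the square sum is $\ell^2 < \infty$), as well as in $\spabscon$; moreover $\s{y_\ell} = \ell\,\s{y_1}$, so $F(\s{y_\ell}) = \ell\,F(\s{y_1})$ grows without bound while being simultaneously bounded by $2^{-n+1}$, the contradiction we need.

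There is no real obstacle: the only point requiring verification is that the specific sequences $\s{y_\ell}$ used in the original argument actually belong to the two new subspaces, which is immediate because they are eventually zero. Consequently, I would present the corollary as two nearly identical one-line invocations of the proof of Proposition~\ref{prop:no computable norm over space of abs convergent sequences}, with a short remark pointing out that $\s{y_\ell} \in \mathbf{1^\infty}\cap\mathbf{1^2}$ for every $\ell$, so the scaling step and the oracle-substitution step both remain valid inside each subspace topology.
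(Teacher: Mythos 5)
Your proposal is correct and matches the paper's intent exactly: the paper gives no separate proof, stating only that the argument of Proposition~\ref{prop:no computable norm over space of abs convergent sequences} generalizes to $\mathbf{1^\infty}$ and $\mathbf{1^2}$, and your observation that the eventually-zero witnesses $\s{y_\ell}$ lie in both subspaces is precisely the verification needed.
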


\section{Computing Total Functions over $\topo{X}_p$}
 
\label{sec:Computing Total Functions over the product topology} 
 
Now we focus on the computability and complexity theoretic aspects of computing total functions $f \colon \RN \to \reals$. Given a function $f$ we always assume that the machine computing $f$ asks the minimum number of queries necessary to compute $f$ at the given input.

Assume a function $f$ and an oracle machine $\mach{M}$ computing $f$. Let $\mathcal{Q}_{\mach{M},\varphi_{\s x}}^n$ denote the set of all queries
submitted by $\oramach{M}{\varphi_{\s x}}{n}$ and let 
$\mathcal{Q}_{\mach{M},\varphi_{\s x}}^n(k)$ denote the $k$th
query 
where $k = 1,\ldots,|\mathcal{Q}_{\mach{M},\varphi_{\s x}}^n|$. Let $cord_{\mach{M},\varphi_{\s x}}^n$ be the projection of $\mathcal{Q}_{\mach{M},\varphi_{\s x}}^n$ on the first component, that is, $cord_{\mach{M},\varphi_{\s x}}^n$ is the set of the coordinates of the queries submitted by 
$\oramach{M}{\varphi_{\s x}}{n}$. Similarly, we let $cord_{\mach{M},\varphi_{\s x}}^n(k)$ denote the coordinate of the $k$th query. The following two lemmas show that the definition of $cord$ is independent of both the machine computing the function and the representation of the input sequence.

\begin{lem}
\label{lem:cordinates independent from representation of input}
Assume a function $f \colon \RN \to \reals$ that is computable by an oracle machine $\mach{M}$.
Let $\s x \in \RN$, then $cord_{\mach{M},\varphi_{\s x}}^n$ 
is the same for any representation $\varphi_{\s x}$ 
of $\s x$.
\end{lem}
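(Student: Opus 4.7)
The plan is a proof by contradiction that exploits the standing assumption stated at the start of the section, namely that $\mach{M}$ submits the minimum number of queries necessary to compute $f$ at the given input. Suppose there exist two representations $\varphi_{\s x}$ and $\psi_{\s x}$ of the same $\s x$ with $cord_{\mach{M},\varphi_{\s x}}^n \neq cord_{\mach{M},\psi_{\s x}}^n$. After possibly swapping the two, pick an index $i$ that appears in $cord_{\mach{M},\varphi_{\s x}}^n$ but not in $cord_{\mach{M},\psi_{\s x}}^n$.

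Next I would exploit the fact that coordinate $i$ is never read during $\oramach{M}{\psi_{\s x}}{n}$. Because $\mach{M}$ is deterministic, its complete transcript, and hence its output $v$, is determined by the sequence of query--answer pairs actually exchanged. Given any sequence $\s y \in \RN$ that agrees with $\s x$ at every coordinate except possibly $i$, one can build a valid representation $\psi_{\s y}$ of $\s y$ which coincides with $\psi_{\s x}$ on exactly the queries asked during the run --- this is legitimate precisely because none of those queries mention coordinate $i$, and the remaining queries may be answered in any way consistent with $\s y$. Then $\oramach{M}{\psi_{\s y}}{n}$ follows the identical transcript and emits the same value $v$, which must therefore lie within $2^{-n}$ of both $f(\s x)$ and $f(\s y)$.

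Consequently, along the whole ``fibre'' obtained by varying only $x_i$, the function $f$ oscillates by at most $2^{-n+1}$, so any precise knowledge of $x_i$ is superfluous for computing $f(\s x)$ to precision $2^{-n}$: imitating the run $\oramach{M}{\psi_{\s x}}{n}$ and outputting its value $v$ already supplies an approximation of $f(\s x)$ to precision $2^{-n}$ while using strictly fewer queries than $\oramach{M}{\varphi_{\s x}}{n}$, which still queries coordinate $i$. This contradicts the minimality hypothesis on $\mach{M}$ at input $\s x$ and precision $2^{-n}$, closing the argument; the same reasoning with the roles of $\varphi_{\s x}$ and $\psi_{\s x}$ swapped handles the opposite inclusion.

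The main obstacle I expect is making the last step water-tight: turning the semantic fact that $x_i$ is informationally irrelevant into the syntactic statement that the minimum-queries machine for $f$ cannot ever ask about coordinate $i$ at $(\s x,n)$. Care is required because $\mach{M}$ is a single fixed object whose queries depend on oracle answers; the argument really compares query counts across all admissible computations and representations, using minimality as the bridge that transports a shorter correct computation (derived from $\psi_{\s x}$) into a witness that $\oramach{M}{\varphi_{\s x}}{n}$ is not minimal. A secondary technical point is the construction of the auxiliary representation $\psi_{\s y}$: one has to verify explicitly that answers valid for $\s x$ on queries at coordinates other than $i$ remain valid for $\s y$, which is immediate from $x_j = y_j$ for $j \neq i$.
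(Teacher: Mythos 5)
Your opening moves are fine and match the paper's key mechanism: since the run $\oramach{M}{\psi_{\s x}}{n}$ never queries coordinate $i$, any $\s y$ agreeing with $\s x$ off $i$ admits a representation $\psi_{\s y}$ that reproduces the transcript, so the common output $v$ is $2^{-n}$-close to both $f(\s x)$ and $f(\s y)$. The genuine gap is exactly where you predicted: the closing appeal to minimality. Two concrete problems. First, ``strictly fewer queries'' is unsupported: your ``possibly swapping'' is spent on making $i$ exist, not on controlling counts, and when both difference sets $cord_{\mach{M},\varphi_{\s x}}^n\setminus cord_{\mach{M},\psi_{\s x}}^n$ and $cord_{\mach{M},\psi_{\s x}}^n\setminus cord_{\mach{M},\varphi_{\s x}}^n$ are nonempty neither run need submit fewer queries than the other; even containment of the coordinate sets says nothing about query counts, since a run may query one coordinate several times at increasing precision. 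Second, and more fundamentally, a short correct run relative to $\psi_{\s x}$ does not witness non-minimality of the run relative to $\varphi_{\s x}$: the number of queries needed may legitimately depend on the representation, because $\varphi_{\s x}$ may return less favorable approximations that force follow-up queries. Likewise, near-constancy of $f$ on the one-dimensional fibre where every coordinate except $i$ is held \emph{exactly} equal to that of $\s x$ does not make a query to coordinate $i$ superfluous for a machine that only possesses finitely many finite-precision answers about the other coordinates. So no contradiction with the minimality assumption has actually been derived.

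The paper closes the argument differently, and its route shows how to repair yours. It invokes minimality only once and in the forward direction: because the \emph{other} run queries coordinates not queried by the first run, those queries must be necessary, which yields a witness $\s y$ that agrees with $\s x$ on all coordinates actually read by the first run yet has $|f(\s x)-f(\s y)|$ larger than roughly $2^{-n+2}$ (with $n$ chosen suitably large among the precisions where the coordinate sets disagree). The contradiction then comes purely from your indistinguishability step -- no comparison of query counts across representations is ever made. To fix your proof, widen the fibre from the single coordinate $i$ to \emph{all} coordinates outside $cord_{\mach{M},\psi_{\s x}}^n$ (on your one-coordinate fibre you have just shown $f$ is nearly constant, so no far-away witness can exist there), use minimality to produce a $\s y$ on that wider fibre with $f(\s y)$ far from $f(\s x)$, and then your transcript-reproduction argument for $\psi_{\s y}$ gives $|f(\s x)-f(\s y)|\le 2^{-n+1}$, the desired contradiction.
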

\begin{proof}
If $f$ is constant, then the lemma holds trivially since $cord_{\mach{M},\varphi_{\s x}}^n = \emptyset$. So assume $f$ is not constant. Assume the lemma does not hold so there exists $\s x \in \RN$ and infinitely many $n \in \naturals$ such that $cord_{\mach{M},\varphi_{\s x}}^n \not= cord_{\mach{M},\varphi'_{\s x}}^n$ for some representations $\varphi_{\s x}$ and $\varphi'_{\s x}$ of $\s x$. 
Assume such $n \in \naturals$.
Let $A = cord_{\mach{M},\varphi_{\s x}}^n \cap cord_{\mach{M},\varphi'_{\s x}}^n$, $B = cord_{\mach{M},\varphi'_{\s x}}^n \setminus cord_{\mach{M},\varphi_{\s x}}^n$, and $C = cord_{\mach{M},\varphi_{\s x}}^n \setminus cord_{\mach{M},\varphi'_{\s x}}^n$. Then either one or both of $B$ and $C$ must be non-empty. Assume $B$ is non-empty. Since $B$ is non-empty and $\mach{M}$ submits the minimum number of queries necessary to compute $f(\s x)$ there exist
$\s y \in \RN$ such that: $y_i = x_i$ for all $i \in A \cup C$ and $f(\s x) \not= f(\s y)$.
Assume a number $m \in \naturals$ such that $|f(\s x) - f(\s y)| > 2^{-m}$. By continuity of $f$ we can choose $\s y$ such that $m$ is large enough with respect to $n$. Let $\varphi_{\s y}$ be a representation of $\s y$ such that $\varphi_{\s y}(0^i,.) = 
\varphi'_{\s x}(0^i,.)$ for all $i \in A \cup C$.
Construct a sequence $\s z \in \RN$ such that:
$z_i = x_i$ for all $i \in A \cup C$ and $z_i = y_i$ otherwise. Then $\s z$ can be represented by two regular functions: $\varphi_{\s z}$ with
$\varphi_{\s z}(0^i,.) = \varphi_{\s x}(0^i,.)$ for all $i \in A \cup C$ and $\varphi'_{\s z} = \varphi_{\s y}$.

Then we have:

\begin{myequation}
& |f(\s z) - \oramach{M}{\varphi_{\s z}}{n}| \le 2^{-n},\qquad \textit{by definition} \\
& |f(\s x) - \oramach{M}{\varphi_{\s x}}{n}| \le 2^{-n},\qquad \textit{by definition} \\
& \oramach{M}{\varphi_{\s x}}{n} = \oramach{M}{\varphi_{\s z}}{n}, \qquad \textit{by definition of } \varphi_{\s z}
\end{myequation}

Hence,

\begin{myequation}
|f(\s x) - f(\s z)| \le 2^{-n+1}
\end{myequation}

Similarly,

\begin{myequation}
& |f(\s z) - \oramach{M}{\varphi_{\s z}'}{n}| \le 2^{-n} ,\qquad \textit{by definition} \\
& |f(\s y) - \oramach{M}{\varphi_{\s y}}{n}| \le 2^{-n} ,\qquad \textit{by definition} \\
& \oramach{M}{\varphi_{\s y}}{n} = \oramach{M}{\varphi_{\s z}'}{n} , \qquad \textit{by definition of } \varphi'_{\s z}
\end{myequation}

Hence,
    
\begin{myequation}
|f(\s y) - f(\s z)| \le 2^{-n+1}
\end{myequation}

So

\begin{myequation}
|f(\s x) - f(\s y)| &\le |f(\s x) - f(\s z)| + |f(\s z) - f(\s y)| \\
& \le 2^{-n+1} + 2^{-n+1} \le 2^{-n+2} \\
\end{myequation}

By choosing $\s y$ such that $n \ge m+2$ we reach a contradiction.       
\end{proof}

As a consequence of this lemma we generalize our notation to $cord_{\mach{M},\s x}^n$.
The following lemma shows the independence of the coordinate set of the machine computing the relevant function.

\begin{lem}
\label{lem:cordinates independent from the machine computing the function}
Assume a function $f \colon \RN \to \reals$ that is computable by two oracle machines $\mach{M}_1$ and $\mach{M}_2$.
Then $cord_{\mach{M}_1,\s x}^n = cord_{\mach{M}_2,\s x}^n$.
\end{lem}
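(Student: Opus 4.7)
The plan is to argue by contradiction, mimicking the structure of Lemma \ref{lem:cordinates independent from representation of input} but with two different machines playing the role that two different representations played there. Suppose for contradiction that $cord_{\mach{M}_1,\s x}^n \neq cord_{\mach{M}_2,\s x}^n$ for some $\s x \in \RN$ and $n \in \naturals$. Write $S_k = cord_{\mach{M}_k,\s x}^n$ and set $A = S_1 \cap S_2$, $B = S_2 \setminus S_1$, $C = S_1 \setminus S_2$. By swapping the labels of $\mach{M}_1$ and $\mach{M}_2$ if necessary, one may assume $B \neq \emptyset$.

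Next I would invoke the same minimality argument used in Lemma \ref{lem:cordinates independent from representation of input}, now applied to $\mach{M}_2$: since $B \neq \emptyset$ and $\mach{M}_2$ submits the minimum queries necessary to compute $f(\s x)$, there exists $\s y \in \RN$ with $y_i = x_i$ for every $i \in A \cup C$ and $f(\s y) \neq f(\s x)$. By continuity of $f$ this witness can be refined so that $|f(\s y) - f(\s x)| > 2^{-n+2}$ by choosing $\s y$ close enough to $\s x$ in the product topology while preserving the agreement on $A \cup C$.

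Then I would transfer the computation of $\mach{M}_1$ from $\s x$ to $\s y$. Fix any representation $\varphi_{\s x}$ of $\s x$; by Lemma \ref{lem:cordinates independent from representation of input} the coordinates queried by $\mach{M}_1^{\varphi_{\s x}}(n)$ are exactly $A \cup C$. Build a representation $\varphi_{\s y}$ of $\s y$ that responds to every query $\langle i, j \rangle$ with $i \in A \cup C$ identically to $\varphi_{\s x}$ (valid since $\s x$ and $\s y$ agree on $A \cup C$) and responds correctly to all other queries. The computations $\mach{M}_1^{\varphi_{\s x}}(n)$ and $\mach{M}_1^{\varphi_{\s y}}(n)$ then traverse identical trajectories and produce equal outputs. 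The triangle inequality $|f(\s x) - f(\s y)| \le |f(\s x) - \mach{M}_1^{\varphi_{\s x}}(n)| + |\mach{M}_1^{\varphi_{\s y}}(n) - f(\s y)| \le 2^{-n+1}$ then contradicts the chosen lower bound, completing the argument.

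The main obstacle, shared with Lemma \ref{lem:cordinates independent from representation of input}, is the first step of producing the witness $\s y$ from the minimality hypothesis. The intended reading is that $B$ being part of a minimum query set for $\mach{M}_2$ forces the $B$-coordinates to carry information about $f(\s x)$ that cannot be recovered from the remaining queried coordinates alone; equivalently, $f$ must vary non-trivially on the fiber $\{\s y : y_i = x_i \text{ for } i \in A \cup C\}$, since otherwise the values on $A$ together with whatever is already known would let $\mach{M}_2$ omit the $B$-queries. Once this witness is in hand, the remainder is the same continuity-plus-triangle-inequality calculation that closed Lemma \ref{lem:cordinates independent from representation of input}.
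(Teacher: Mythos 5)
Your overall architecture is the same as the paper's: decompose the two query-coordinate sets into $A$, $B=S_2\setminus S_1$, $C=S_1\setminus S_2$ (the paper's $A$, $B_2$, $B_1$), use minimality of $\mach{M}_2$'s queries to produce a witness $\s y$ agreeing with $\s x$ on the coordinates $\mach{M}_1$ actually reads, transfer $\mach{M}_1$'s run from $\varphi_{\s x}$ to a suitably built $\varphi_{\s y}$, and close with the triangle inequality. However, there is a genuine gap at the step where you manufacture the quantitative separation. You first obtain a witness with merely $f(\s y)\neq f(\s x)$ and then claim that ``by continuity of $f$'' you can refine it so that $|f(\s x)-f(\s y)|>2^{-n+2}$ by choosing $\s y$ close enough to $\s x$. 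Continuity works in exactly the opposite direction: moving $\s y$ closer to $\s x$ (while keeping agreement on $A\cup C$) can only shrink $|f(\s x)-f(\s y)|$, never enlarge it, so no refinement of this kind produces the lower bound you need. Moreover, since you fixed a single precision $n$ at the outset, you also cannot rescue the argument the way the paper's Lemma \ref{lem:cordinates independent from representation of input} does, namely by first fixing the witness with some gap $2^{-m}$ and then taking the precision large relative to $m$; with $n$ already frozen, a witness whose gap is smaller than $2^{-n+1}$ yields no contradiction from $|f(\s x)-f(\s y)|\le 2^{-n+1}$.

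The paper's proof avoids this by folding the quantitative bound directly into what the minimal-queries hypothesis delivers: because $\mach{M}_2$ asks only queries necessary to produce a $2^{-n}$-accurate output, the fact that it queries a coordinate in $B$ already forces the existence of $\s y$ with $y_i=x_i$ for all $i\notin B$ and $|f(\s x)-f(\s y)|>2^{-n+1}$ --- if every such $\s y$ had $|f(\s x)-f(\s y)|\le 2^{-n+1}$, the $B$-queries would be dispensable at this precision. With that witness, your remaining steps (constructing $\varphi_{\s y}$ to answer queries on $A\cup C$ exactly as $\varphi_{\s x}$ does, so that $\oramach{M_1}{\varphi_{\s x}}{n}=\oramach{M_1}{\varphi_{\s y}}{n}$, then the triangle inequality giving $|f(\s x)-f(\s y)|\le 2^{-n+1}$) are exactly the paper's and are fine. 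So the repair is to replace your ``refine by continuity'' step with the direct quantitative use of minimality at precision $n$ (or, alternatively, to set up the contradiction so that the coordinate sets disagree for arbitrarily large $n$, letting you choose $n$ after the witness, as the paper does in its Lemma \ref{lem:cordinates independent from representation of input}).
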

\begin{proof}
If $f$ is constant, then the lemma trivially holds. So assume $f$ is not constant and the lemma does not hold.
So there exist infinitely many $\s x \in \RN$ and 
infinitely many $n\in\naturals$ such that $cord_{\mach{M}_1,\s x}^n \not= cord_{\mach{M}_2,\s x}^n$. Choose such an $\s x \in \RN$ and a large enough $n \in \naturals$.
Let $A = cord_{\mach{M}_1,\s x}^n \cap cord_{\mach{M}_2,\s x}^n$, $B_1 = cord_{\mach{M}_1,\s x}^n \setminus cord_{\mach{M}_2,\s x}^n$, and
$B_2 = cord_{\mach{M}_2,\s x}^n \setminus                      cord_{\mach{M}_1,\s x}^n$. Either $B_i$ or both must be non-empty; assume $B_2$ is non-empty.
Since any machine computing $f$ submits the minimum number of queries required to generate a valid output, 
there exists $\s y \in \RN$ such that $y_i = x_i$ 
for all $i \not \in B_2$ and $|f(\s x) - f(\s y)| > 2^{-n+1}$. Let $d = \mathcal{M}_1^{^{\s y}}(n) = \mathcal{M}_1^{^{\s x}}(n)$ (the latter equality holds by definition of $\s y$ and the sets $A$ and $B_1$).
Then $|f(\s x) - f(\s y)| \le |f(\s x) -  \mathcal{M}_1^{^{\s x}}(n)| + |\mathcal{M}_1^{^{\s x}}(n) - \mathcal{M}_1^{^{\s y}}(n)| + |\mathcal{M}_1^{^{\s y}}(n) - f(\s y)| \le 2^{-n} + 0 + 2^{-n} = 2^{-n+1}$ which is a contradiction.
\end{proof}

As a consequence of this lemma we generalize the notation to $cord_{\s x}^n$.

\subsection{Bounded Number of Queries}
\label{Sec: Bounded Number of Queries}

In this section we give a characterization of the functions $f\colon \RN \to \reals$ which can be computed by Turing machines that submit only a uniformally bounded number of queries to the oracle regardless of the input sequence $\s x$ and the input precision $n$. 
We start by showing uniform boundedness with respect to a given input sequence; that is, given $\s x \in \RN$ there exists $M \in \naturals$ such that 
$cord_{\s x}^n(i) \le M$ for all $n$ and all $i$.

\begin{lem}
\label{lem:coordinates are bounded}
Assume a function $f\colon \RN \to \reals$ that is computable by an oracle Turing machine $\mach{M}$ that submits at most $\ell$ queries for some $\ell \in \naturals$. Then for each $\s x \in \RN$ we have $cord_{\s x}^n$ is uniformally bounded in $n$.
\end{lem}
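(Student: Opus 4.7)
The plan is to show that the coordinate sets $cord_{\s x}^n$ form a monotone nondecreasing chain in $n$, and then to combine this with the uniform cardinality bound coming from the hypothesis to conclude that the chain stabilizes at a finite union.

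The key step is to establish monotonicity: $cord_{\s x}^n \subseteq cord_{\s x}^{n+1}$ for every $n \in \naturals$. To prove this, I would introduce a precision-shifted machine $\mach{M}'$ defined by $\oramach{M'}{\varphi}{n} := \oramach{M}{\varphi}{n+1}$. Since the output of $\mach{M}$ at precision $n+1$ approximates $f$ to within $2^{-n-1} \le 2^{-n}$, the machine $\mach{M}'$ also computes $f$. At the input $(\s x,n)$, the set of coordinates queried by $\mach{M}'$ is exactly the set of coordinates queried by $\mach{M}$ at $(\s x,n+1)$, namely $cord_{\s x}^{n+1}$. Reducing $\mach{M}'$ to a machine that submits the minimum number of queries at $(\s x,n)$ can only remove queries, so the resulting coordinate set is a subset of $cord_{\s x}^{n+1}$. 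By Lemma \ref{lem:cordinates independent from the machine computing the function}, this set must coincide with $cord_{\s x}^n$, which yields $cord_{\s x}^n \subseteq cord_{\s x}^{n+1}$.

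Having monotonicity in hand, the hypothesis that $\mach{M}$ submits at most $\ell$ queries gives $|cord_{\s x}^n| \le \ell$ for every $n$. A nondecreasing chain $cord_{\s x}^0 \subseteq cord_{\s x}^1 \subseteq cord_{\s x}^2 \subseteq \cdots$ of subsets of $\naturals$ whose cardinalities never exceed $\ell$ must stabilize at some index $N$, hence $\bigcup_n cord_{\s x}^n = cord_{\s x}^N$ is a finite set of size at most $\ell$. Setting $M := \max \bigcup_n cord_{\s x}^n$ gives the required uniform bound $cord_{\s x}^n \subseteq \{0, 1, \ldots, M\}$ for all $n$.

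The main obstacle is the monotonicity step, specifically the need to pass from the possibly non-minimal machine $\mach{M}'$ to one that submits the minimum number of queries at $(\s x, n)$ without introducing new coordinates outside $cord_{\s x}^{n+1}$. This reduces to the observation that discarding superfluous queries only shrinks the coordinate set, combined with a careful invocation of Lemma \ref{lem:cordinates independent from the machine computing the function} to identify the shrunk set with $cord_{\s x}^n$. Once monotonicity is in place, the combinatorial conclusion follows immediately.
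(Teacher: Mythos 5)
The whole argument stands or falls with the monotonicity claim $cord_{\s x}^n \subseteq cord_{\s x}^{n+1}$, and the justification you give for it has a genuine gap. You assert that ``reducing $\mach{M}'$ to a machine that submits the minimum number of queries at $(\s x,n)$ can only remove queries.'' Minimizing the \emph{number} of queries is a global property of the computation, not a local pruning operation: a machine achieving the minimum at precision $n$ may interrogate a completely different set of coordinates rather than a subset of those interrogated by $\mach{M}'$, and a machine from which no individually superfluous query can be discarded need not ask the minimum number of queries at all. Lemma \ref{lem:cordinates independent from the machine computing the function} cannot be used to close this hole the way you invoke it: under the paper's standing convention it compares two machines that \emph{already} ask the minimum number of queries and concludes their coordinate sets are equal; it says nothing about how $cord_{\s x}^n$ relates to the query set of the non-minimal machine $\mach{M}'$, which is precisely the inclusion you need. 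So the central step of your proof is asserted rather than proved.

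The step can be repaired, but by a different argument: mimic the adversary construction inside the proofs of Lemmas \ref{lem:cordinates independent from representation of input} and \ref{lem:cordinates independent from the machine computing the function} rather than their statements. If some $i \in cord_{\s x}^n \setminus cord_{\s x}^{n+1}$, then minimality of the precision-$n$ computation yields a sequence $\s y$ with $y_j = x_j$ for all $j \notin cord_{\s x}^n \setminus cord_{\s x}^{n+1}$ and $|f(\s x)-f(\s y)| > 2^{-n+1}$; choosing a representation of $\s y$ agreeing with that of $\s x$ on the queried coordinates gives $\oramach{M}{\varphi_{\s y}}{n+1} = \oramach{M}{\varphi_{\s x}}{n+1}$, hence $|f(\s x)-f(\s y)| \le 2\cdot 2^{-(n+1)} = 2^{-n}$, a contradiction. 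With that substitution your route does go through, and it is genuinely different from the paper's: the paper never establishes (or needs) monotonicity, but instead fixes $\s x$, splits the query positions according to whether $\limsup_{n} cord_{\s x}^n(i)$ is finite, and derives a contradiction from continuity of $f$ via a sequence $\s y^n \to \s y$ in the product topology. Your stabilization argument (a nested chain of sets of size at most $\ell$ must stabilize) is correct and, once the inclusion is properly proved, arguably more elementary than the paper's limit argument; as written, however, the proof is incomplete at its crucial point.
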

\begin{proof}
If $f$ is constant, then the lemma trivially holds. So assume $f$ is not constant. Fix $\s x \in \RN$.
Let $B = \{i  \le \ell\colon \limsup_{n\to\infty}cord_{\s x}^n(i) < \infty\}$ and let $U = \{i \le \ell\colon \limsup_{n\to\infty}cord_{\s x}^n(i) = \infty\}$. Let $L \in \naturals$ such that for any 
$i \in B$ and $n \in \naturals$ we have $cord_{\s x}^n(i) \le L$. If $U = \emptyset$, then we are done. So assume $U$ is non-empty. Since the machine infinitely often submits a query beyond $L$, we can assume -without loss of generality- the existence of some $\s y \in \RN$ such that $x_i = y_i$ for all $i \le L$ and 
$f(\s x) \not= f(\s y)$. For every $n \in \naturals$
let $j_n = \max\{L,\min\{cord_{\s x}^n(i)\colon i \in U\}-1\}$. Define the following sequence in $\RN$: $(\s y^n)$, where $y^n_i = y_i$ for all $i \le j_n$ and $y^n_i = x_i$ otherwise. Then $(\s y^n) \to \s y$. For sufficiently large $n$ we have by definition of $\s y^n$: $\oramach{M}{\s y^n}{n} = \oramach{M}{\s x}{n}$, that is the machine $\mach{M}$ can not distinguish between $\s y^n$ and $\s x$ at precision $2^{-n}$.
Then $|f(\s x) - f(\s y^n)| \le 2^{-n+1}$, so

\begin{myequation}
& \lim_{n\to\infty} |f(\s x) - f(\s y^n)| \le \lim_{n \to \infty} 2^{-n+1} \\
& |f(\s x) - f(\s y)| \le \lim_{n \to \infty} 2^{-n+1} = 0 
\end{myequation}

So  $f(\s x) = f(\s y)|$ which is a contradiction 
\end{proof}

The next lemma is an extension of the previous one to show that $cord_{\s x}^n$ is fixed.

\begin{lem}
\label{lem:coordinates are bounded and fixed}
Assume the premises of Lemma \ref{lem:coordinates are bounded}. Then for each $\s x \in \RN$ we have $cord_{\s x}^n$ is fixed irrespective of $n$.
\end{lem}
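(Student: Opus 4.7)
The plan is to extend the technique of Lemma \ref{lem:coordinates are bounded}. Fix $\s x \in \RN$ and invoke Lemma \ref{lem:coordinates are bounded} to obtain $L \in \naturals$ with $cord_{\s x}^n \subseteq \{0,\ldots,L\}$ for every $n$. The family $\{cord_{\s x}^n : n \in \naturals\}$ is then contained in the finite power set $2^{\{0,\ldots,L\}}$, so by the pigeonhole principle there is a set $D$ with $cord_{\s x}^n = D$ for infinitely many $n$.

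The first step is to show that $f$ is constant on the slice $S(D) := \{\s y \in \RN : y_j = x_j \text{ for all } j \in D\}$. Given $\s y \in S(D)$ and any $n$ with $cord_{\s x}^n = D$, one builds a representation $\varphi_{\s y}$ that agrees with $\varphi_{\s x}$ on every query submitted by $\oramach{M}{\varphi_{\s x}}{n}$ (possible because all queries are to coordinates in $D$, where $\s x$ and $\s y$ coincide). Then $\oramach{M}{\varphi_{\s y}}{n} = \oramach{M}{\varphi_{\s x}}{n}$, and combining the output-correctness bounds for the two inputs gives $|f(\s y) - f(\s x)| \le 2^{-n+1}$. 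Letting $n \to \infty$ along the subsequence yields $f(\s y) = f(\s x)$, so $f$ at $\s x$ depends only on coordinates in $D$. By Lemma \ref{lem:cordinates independent from the machine computing the function} and the minimality of queries, this forces $cord_{\s x}^n \subseteq D$ for all $n$.

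Suppose for contradiction that $cord_{\s x}^{n_0} = D' \subsetneq D$ for some $n_0$; pick $i \in D \setminus D'$. Minimality of the query set at an $n^*$ with $cord_{\s x}^{n^*} = D$ forces the essentiality of $i$, producing a witness $\s y$ with $y_j = x_j$ for every $j \in D \setminus \{i\}$ (in particular $\s y \in S(D')$) and $|f(\s y) - f(\s x)| > 2^{-n^*+1}$. Mimicking the approximating-sequence construction at the end of Lemma \ref{lem:coordinates are bounded}, I would define $\s y^k \to \s y$ that agrees with $\s x$ on all coordinates past an increasing prefix and with $\s y$ at the distinguishing index $i$, and argue that at the recurring precisions $k$ with $cord_{\s x}^k = D$ the machine cannot distinguish $\s y^k$ from $\s x$, giving $|f(\s y^k) - f(\s x)| \le 2^{-k+1}$. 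Taking $k \to \infty$ along this subsequence and using continuity of $f$ would yield $f(\s y) = f(\s x)$, contradicting the strict lower bound from condition M. The main obstacle is precisely this bookkeeping: coordinating the approximating sequence $\s y^k$ with the machine's (deterministic but $n$-dependent) query pattern so that at every recurring precision the run on $\s y^k$ traces out exactly the run on $\s x$, which is the same kind of coherence argument that carries Lemma \ref{lem:coordinates are bounded} across its final limit.
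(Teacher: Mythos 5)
Your first half — pigeonholing to get an infinitely recurring query set $D$ and then proving that $f$ is constant on the slice $S(D)$ by letting $n\to\infty$ along the recurring precisions — is sound and in the spirit of the paper. The contradiction step, however, has a genuine flaw, not a bookkeeping issue. Your essentiality witness $\s y$ differs from $\s x$ at a coordinate $i \in D$, i.e.\ at a coordinate that \emph{is} queried at every recurring precision $k$ with $cord_{\s x}^k = D$. So the claim that the machine ``cannot distinguish $\s y^k$ from $\s x$'' at those precisions is false: $y^k_i = y_i \ne x_i$ is a fixed nonzero difference at a queried coordinate, and since $\s y^k \to \s y$ with $|f(\s y)-f(\s x)|$ bounded away from $0$, correctness of $\mach{M}$ at output precision $2^{-k}$ actually \emph{forces} it to distinguish $\s y^k$ from $\s x$ for all large $k$; the bound $|f(\s y^k)-f(\s x)| \le 2^{-k+1}$ you need cannot hold. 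The only precision where $i$ is guaranteed to be absent from the query set is the single $n_0$ with $cord_{\s x}^{n_0} = D'$, and indistinguishability there yields only $|f(\s x)-f(\s y)| \le 2^{-n_0+1}$, which does not contradict a gap known merely to exceed $2^{-n^*+1}$ for an $n^*$ in the recurring family (which may be far larger than $n_0$). Your intermediate claim that constancy on $S(D)$ plus minimality already forces $cord_{\s x}^n \subseteq D$ for all $n$ is also not justified: necessity of a query at precision $n_0$ only produces a witness agreeing with $\s x$ on $cord_{\s x}^{n_0}$ minus that coordinate, and such a witness need not lie in $S(D)$.

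The paper orients the argument the other way around. It pigeonholes into \emph{two} infinite families $N_1,N_2$ of precisions carrying different (but within each family constant) query sets, picks a coordinate $j$ queried along $N_1$ but never along $N_2$, and uses minimality to produce a witness $\s y$ differing from $\s x$ \emph{only} at $j$ with $|f(\s x)-f(\s y)| > 2^{-k}$ for a fixed $k$. Because $N_2$ is infinite, it can choose $k' \in N_2$ with $k' > k+2$; at that arbitrarily high precision the machine never queries $j$, hence $\oramach{M}{\s x}{k'} = \oramach{M}{\s y}{k'}$ and $|f(\s x)-f(\s y)| \le 2^{-k'+1} < 2^{-k}$, a contradiction. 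The essential resource is an unbounded set of precisions at which the distinguishing coordinate is \emph{absent} from the query set; your setup supplies only one such precision and places the distinguishing coordinate inside the recurring query set, so the limiting argument that closes Lemma \ref{lem:coordinates are bounded} has no analogue in your construction.
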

\begin{proof}
Assume $f$ is not constant and let $\mach{M}$ be a Turing machine computing $f$. Assume the lemma does not hold.  Let $\s x \in \RN$ be a counterexample to the lemma and let $\alpha \in \naturals$ be a uniform bound on 
$cord_{\s x}^n$; its existence is determined by Lemma \ref{lem:coordinates are bounded}. Since $\alpha$ is finite there exist two infinite disjoint sets $N_1,N_2 \subseteq \naturals$
such that $cord_{\s x}^m = cord_{\s x}^n$ for all
$m,n \in N_1$, $cord_{\s x}^m = cord_{\s x}^n$ for all $m,n \in N_2$, and $cord_{\s x}^m \not= cord_{\s x}^n$ for any $m \in N_1$ and $n \in N_2$.
For $m \in N_1$ and $n \in N_2$ let $A = cord_{\s x}^m \cap cord_{\s x}^n$, $B = cord_{\s x}^m \setminus cord_{\s x}^n$, and $C = cord_{\s x}^n \setminus cord_{\s x}^m$. Then at least one of $B$ and $C$ must be non-empty. Assume $B$ is non-empty. There exists $\s y \in \RN$ such that $f(\s x) \not= f(\s y)$ and $y_j \not= x_j$ for some $j \in B$ and $y_i = x_i$ for all $i \in \naturals \setminus \{j\}$.
Let $k \in \naturals$ be minimal such that $|f(\s x) - f(\s y)| > 2^{-k}$. Let $k' \in N_2$ be such that 
$k' > k+2$. Then we must have: $\oramach{M}{\s x}{k'} = \oramach{M}{\s y}{k'}$ which implies that 
$|f(\s x) - f(\s y)| \le |f(\s x) - \oramach{M}{\s x}{k'}| + |\oramach{M}{\s x}{k'} - \oramach{M}{\s y}{k'}| + |\oramach{M}{\s y}{k'} - f(\s y)| \le 2^{-k'} + 0 + 2^{-k'} = 2^{-k'+1} < 2^{-k-1} < 2^{-k}$
which is a contradiction.
\end{proof}
 
As a consequence of Lemma \ref{lem:coordinates are bounded and fixed} we can simplify the notation  $cord_{\s x}^n$ to $cord_{\s x}$ (only for the case of bounded number of queries). The following proposition uniformally generalizes the previous two lemmas 
to the whole domain of $f$, that is, the whole space $\RN$.

\begin{prop}
\label{prop:coordinates are the same for all input}
Assume a function $f\colon \RN \to \reals$ that is computable by an oracle Turing machine that submits at most $\ell$ queries for some $\ell \in \naturals$. Then $cord_{\s x}$ is the same for all $\s x \in \RN$, that is, $cord_{\s x}(i)$ is the same for all $\s x \in \RN$ and all $i=1,\ldots,\ell$.
\end{prop}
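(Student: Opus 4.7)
Plan.  I would argue by contradiction: suppose there exist $\s x, \s y \in \RN$ with $cord_{\s x} \ne cord_{\s y}$, and by symmetry fix $j \in cord_{\s x} \setminus cord_{\s y}$.  The first step is to establish a ``spreading principle'' in the spirit of Lemma~\ref{lem:cordinates independent from representation of input}: whenever $\s u \in \RN$ agrees with $\s y$ on every coordinate in $cord_{\s y}$, one has $cord_{\s u} = cord_{\s y}$ and $f(\s u) = f(\s y)$.  The proof is to pick a representation $\varphi_{\s u}$ of $\s u$ that matches a chosen $\varphi_{\s y}$ on every query to coordinates in $cord_{\s y}$; since the machine on $\s y$ only ever issues queries to coordinates in $cord_{\s y}$, this makes the two computations trace-identical, so the cords coincide and the outputs differ by at most $2^{-n+1}$ for all $n$.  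In particular, $f$ is invariant under arbitrary modifications of the $j$-coordinate on the whole ``$\s y$-cell'' $\{\s u : u_i = y_i \text{ for } i \in cord_{\s y}\}$.

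On the $\s x$-side, the minimum-queries convention yields some $\s x' \in \RN$ with $x'_i = x_i$ for $i \ne j$ and $f(\s x') \ne f(\s x)$.  I would then build the bridging point $\s z$ defined by $z_i = y_i$ for $i \in cord_{\s y}$ and $z_i = x_i$ otherwise; this $\s z$ lies in the $\s y$-cell, so the spreading principle gives $cord_{\s z} = cord_{\s y}$ and $f(\s z) = f(\s y)$, while $z_j = x_j$ because $j \notin cord_{\s y}$.  In the easy case $cord_{\s x} \cap cord_{\s y} = \emptyset$, the very same $\s z$ also agrees with $\s x$ on all of $cord_{\s x}$, so the analogous spreading principle applied at $\s x$ forces $cord_{\s z} = cord_{\s x}$, immediately contradicting $cord_{\s x} \ne cord_{\s y}$.

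For the general case, I would traverse a length-$K \le |cord_{\s y}| \le \ell$ sequence of single-coordinate flips $\s z = \s z^{(0)}, \ldots, \s z^{(K)} = \s x$ replacing one $cord_{\s y}$-coordinate at a time from $y_i$ to $x_i$, and in parallel the primed sequence from $\s z'$ (the $j$-flip of $\s z$, with $z'_j = x'_j$) to $\s x'$; each pair $(\s z^{(k)}, \s z'^{(k)})$ differs only at coordinate $j$, and $f$-values agree at $k=0$ but disagree at $k=K$.  At the first transition index $k^*$ the single flipped coordinate $m \in cord_{\s y}$ must lie in both $cord_{\s z^{(k^*-1)}}$ and $cord_{\s z^{(k^*)}}$ (otherwise the spreading principle would equate these two cords, contradicting the fact that $j$ lies in one but not the other).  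The main obstacle I anticipate is closing this last step: combining the residual $j$-invariance inherited from $\s z^{(k^*-1)}$ via the spreading principle with the forced presence of $j$ in $cord_{\s z^{(k^*)}}$ to squeeze out an outright contradiction.  This is precisely where the uniform bound $\ell$ does essential work, by capping the path length and enabling an induction on the symmetric difference $|cord_{\s x} \triangle cord_{\s y}|$, so that the transition argument only needs to be carried out finitely many times.
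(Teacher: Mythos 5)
Your ``spreading principle'' is sound (it is essentially a repackaging of Lemmas \ref{lem:cordinates independent from representation of input}--\ref{lem:coordinates are bounded and fixed}: match the representations on $cord_{\s y}$ and the two computations become trace-identical), and the disjoint-cords case does follow from it. But the decisive case is exactly the one you leave open, and the gap is not a matter of bookkeeping that an ``induction on the symmetric difference'' can absorb. At your transition index $k^*$ you know only that the flipped coordinate $m$ is queried on both sides and that $f$ becomes sensitive to coordinate $j$ after the flip while being insensitive to it before. Nothing in this discrete picture is contradictory: a machine could in principle read $x_m$ first and, depending on whether the approximation it receives is clearly above or clearly below some threshold, decide whether its next query is coordinate $j$ or some other coordinate, with $f$ depending on $x_j$ only on one side of the threshold (think of a candidate of the shape $f(\s x)=\max(x_m-c,0)\,\alpha(x_j)+\dots$). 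Your path only ever flips $m$ between the two endpoint values $y_m$ and $x_m$; it never varies $x_m$ through the continuum in between, so it cannot see where, or why, the machine's choice of second coordinate is forced to break down.

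That continuum argument is precisely what the paper's proof supplies and your plan omits: the paper observes that (taking $\ell=2$ for simplicity) the first queried coordinate is input-independent, and that the map $\iota$ sending the value of that coordinate to the coordinate of the second query is computable into a discrete space, hence locally constant; it then takes a boundary point $b$ between two preimage intervals $\iota^{-1}(k)$ and $\iota^{-1}(k')$ and plays an adversarial oracle that always approximates $b$ from one side, forcing the machine to issue its second query at coordinate $k$ even on inputs where $f$ genuinely depends on coordinate $k'$; the machine then cannot separate two such inputs with $|f(\s x)-f(\s y)|>2^{-p}$, a contradiction. So the heart of the proposition is an approximation/continuity phenomenon about adaptive queries, not a combinatorial statement about which cells of $\RN$ share cords; to close your step at $k^*$ you would have to reintroduce exactly this boundary-value, one-sided-oracle argument (and the uniform bound $\ell$ plays only the minor role of making the adaptivity finite-depth, not of capping a path length). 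As it stands, the proposal is incomplete at the crucial point, and the case analysis you set up does not by itself yield a contradiction.
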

\begin{proof}
If $f$ is constant, then the proposition trivially holds. So assume $f$ is not constant. Assume the proposition does not hold. To simplify the discussion we assume that, without loss of generality, $\ell = 2$. It is clear that $cord_{\s x}(1)$ is the same for all $\s x \in \RN$ since it is the first query submitted by the machine computing $f$. Let $i = cord_{\s x}(1)$. By assumption that the proposition does not hold there exists a non-constant computable function $\iota\colon\reals\to\naturals \cup \{-1\}$ such that $\iota(x_i) = cord_{\s x}(2)$ (if $\iota(x_i) = -1$, this means the machine computing $f$ submits only one query). (This function $\iota$ does the following: it takes a real number $a$, simulates $\mach{M}^{^{\s x}}$ for some sequence $\s x$ with $x_i=a$, when $\mach{M}^{^{\s x}}$ submits its second query with coordinate $k$, $\iota$ just returns $k$ and stops.)
Then $\iota$ must be continuous with respect to the Euclidean topology on $\reals$ and the discrete topology on $\naturals \cup \{-1\}$. Choose some $k \in \naturals$ such that $\iota^{-1}(k) \not=\emptyset$, then $\iota^{-1}(k) \supseteq (a,b)$. 
Assume, without loss of generality, that $b$ is a finite real and it is maximal such that $(a,b) \subseteq \iota^{-1}(k)$. Since the topology on $\naturals$ is the discrete topology, $\iota$ is an open map. There exists $\epsilon > 0$ 
such that $\iota(u) = k$ for any $u \in (b-\epsilon,b)$ 
and $\iota(v) = k' \not= k$ for any $v \in (b,b+\epsilon)$. Assume also, without loss of generality that $k' \not= -1$. Let $\s x,\s y \in \RN$ be such that $f(\s x) \not= f(\s y)$, $x_i = y_i = b$, and $y_k = x_k$ (note that $b\not=-1$ and $cord_{\s x}(2) \not= k$ and $cord_{\s y}(2) \not= k$).
Let $p$ be such that $|f(\s x) - f(\s y)| > 2^{-p}$.
Now consider the computation of $\oramach{M}{\s x}{p+1}$ and $\oramach{M}{\s y}{p+1}$ and assume that the oracle always approximate the $i^{th}$ component of the sequence $b$ from the left (that is it always answers with rational numbers $r < b$). This means that the coordinate of the second query will always be $k$, and hence, the oracle machine will not be able to distinguish between $\s x$ and $\s y$. 
So  $\oramach{M}{\s x}{p+1} = \oramach{M}{\s y}{p+1}$ which implies that $|f(\s x) - f(\s y)| \le 2^{-p}$ which is a contradiction.
\end{proof}

This last proposition motivates us to have the more generic notation $cord_f$ to denote the coordinate set for the whole domain of $f$.
Finally, the following theorem gives a characterization
of functions over $\RN$ whose computing machines submit only a uniformally bounded number of queries.

\begin{thm}
\label{thm:Characterizing functions with bounded number of queries}
Assume a function $f\colon \RN \to \reals$ that is computable by an oracle Turing machine that submits at most $\ell$ queries for some $\ell \in \naturals$. Then there exists a computable function $\phi \colon \reals^\ell \to \reals$ and a sequence of non-negative integers $i_1,\ldots,i_\ell$ such that $f(\s x) = \phi(x_{i_1},\ldots,x_{i_\ell})$.
\end{thm}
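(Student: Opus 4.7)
The plan is to apply Proposition \ref{prop:coordinates are the same for all input} to obtain the fixed indices $i_1,\ldots,i_\ell$, and then define $\phi$ as the value of $f$ computed by simulating $\mach{M}$ against oracle answers that reflect only the $\ell$ given real arguments. The essence of the theorem is already contained in the Proposition; the remaining task is bookkeeping with representations.

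First, by Proposition \ref{prop:coordinates are the same for all input}, there is a fixed tuple $(i_1,\ldots,i_\ell) \in \naturals^\ell$ (with repetitions allowed, and padded by dummy indices if $\mach{M}$ sometimes issues fewer than $\ell$ queries) such that for every $\s x \in \RN$ and every $k \in \{1,\ldots,\ell\}$ one has $cord_{\s x}(k) = i_k$. I then define $\phi\colon \reals^\ell \to \reals$ by $\phi(a_1,\ldots,a_\ell) := f(\s x^{\s a})$, where $\s x^{\s a}$ denotes any sequence in $\RN$ with $x^{\s a}_{i_k} = a_k$ for $k=1,\ldots,\ell$ and the remaining coordinates chosen arbitrarily, say equal to $0$.

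Next I would verify that $\phi$ is well defined, i.e.\ that $f(\s x) = f(\s y)$ whenever $x_{i_k} = y_{i_k}$ for every $k$. Fix such $\s x, \s y$ and a precision $n$, and simulate $\oramach{M}{\cdot}{n}$ on both. By the Proposition, the $k$th query produced by $\mach{M}$ carries coordinate $i_k$ irrespective of the oracle. Because $x_{i_k} = y_{i_k}$, I may pick representations $\varphi_{\s x}$ and $\varphi_{\s y}$ that, on each query $\langle i_k, p\rangle$ as it arises during the simulation, return the \emph{identical} rational approximation of the prescribed precision. Building these representations online in lock-step with $\mach{M}$, the two runs coincide step by step, so $\oramach{M}{\varphi_{\s x}}{n} = \oramach{M}{\varphi_{\s y}}{n}$; the approximation guarantee then yields $|f(\s x) - f(\s y)| \le 2^{-n+1}$ for every $n$, hence $f(\s x) = f(\s y)$. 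Applied to $\s x$ and $\s x^{\s a}$ with $\s a = (x_{i_1},\ldots,x_{i_\ell})$, this also delivers the identity $f(\s x) = \phi(x_{i_1},\ldots,x_{i_\ell})$ claimed by the theorem.

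Finally, for computability of $\phi$ I construct an oracle machine $\mach{N}$ that, given representations $\psi_1,\ldots,\psi_\ell$ of the inputs $a_1,\ldots,a_\ell$, simulates $\oramach{M}{\cdot}{n}$ and intercepts each query $\langle j, p\rangle$: if $j = i_k$ for some $k$ it answers with $\psi_k$ at precision $p$, otherwise (a case that by the Proposition does not arise, but is included for totality) it answers $0$. The simulated oracle is then a valid representation of $\s x^{\s a}$, so $\mach{N}$'s output is a $2^{-n}$-approximation to $f(\s x^{\s a}) = \phi(a_1,\ldots,a_\ell)$. The only mildly delicate point throughout is that $\mach{M}$ is adaptive, so the matched representations must be assembled during the simulation rather than in advance; but the Proposition guarantees that only the precision, not the coordinate, of a query can depend on prior replies, making this construction routine.
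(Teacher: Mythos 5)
Your proof is correct and follows essentially the same route as the paper: invoke Proposition \ref{prop:coordinates are the same for all input} to fix the coordinates $i_1,\ldots,i_\ell$ and then obtain $\phi$ from a machine that simulates $\mach{M}$, answering its $k$th query from the $k$th real argument. The only difference is that you additionally spell out the well-definedness of $\phi$ (that $f(\s x)$ depends only on $x_{i_1},\ldots,x_{i_\ell}$, via matched representations), a verification the paper leaves implicit; this is extra detail rather than a different approach.
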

\begin{proof}
This is a direct consequence of Proposition \ref{prop:coordinates are the same for all input}. Let $\mach{M}$ be an oracle Turing machine computing $f$. $\varphi$ can be computed as follows. Assume an oracle Turing machine $\oramach{N}{x_1,\ldots,x_\ell}{n}$ that does the following:
\begin{itemize}
\item $\oramach{N}{x_1,\ldots,x_\ell}{n}$ simulates $\oramach{M}{\s x}{n}$ for some arbitrary sequence $\s x$,
\item when $\mach{M}$ submits its $i^{th}$ query, for $i=1,\ldots,\ell$, $\mach{N}$ submits a query to its own oracle $x_i$, and returns the response to $\mach{M}$,
\item when $\mach{M}$ halts with $r$ on the output tape, $\mach{N}$ writes $r$ on its output tape and halts.
\end{itemize}
\end{proof}

The following corollary is the complexity-theoretic version of the previous theorem.

\begin{cor}
\label{cor: complexity-theoretic - bounded number of queries}
Assume a function $f\colon \RN \to \reals$ that is computable by a polynomial-time oracle Turing machine that submits at most $\ell$ queries for some $\ell \in \naturals$. Then there exists a polynomial-time computable function $\phi \colon \reals^\ell \to \reals$ and a sequence of non-negative integers $i_1,\ldots,i_\ell$ such that $f(\s x) = \phi(x_{i_1},\ldots,x_{i_\ell})$.
\end{cor}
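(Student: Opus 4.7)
The plan is to verify that the machine $\mach{N}$ constructed in the proof of Theorem \ref{thm:Characterizing functions with bounded number of queries} runs in polynomial time (in the second-order sense) whenever $\mach{M}$ does. Theorem \ref{thm:Characterizing functions with bounded number of queries} already supplies the coordinates $i_1,\ldots,i_\ell$ and the computable $\phi$; what remains is a complexity-theoretic bookkeeping argument. Let $P(X,Y)$ be a second-order polynomial witnessing that $\mach{M}$ runs in polynomial time, i.e.\ for every regular $\varphi$ and every $u\in\Sigma^*$, $\oramach{M}{\varphi}{u}$ halts within $P(|\varphi|,|u|)$ steps.

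Next, I would set up the oracle representation for inputs in $\reals^\ell$. A tuple $(a_1,\ldots,a_\ell)\in\reals^\ell$ is represented by a regular function $\psi_{(a_1,\ldots,a_\ell)}\colon\Sigma^*\to\Sigma^*$ with $\psi_{(a_1,\ldots,a_\ell)}(\langle j,p\rangle)=\alpha$ where $|[\alpha]-a_j|\le 2^{-p}$, using the same kind of pairing convention as for sequence representations. For any $\s x\in\RN$ and any sequence representation $\varphi_{\s x}$, one can canonically extract the tuple representation $\psi$ of $(x_{i_1},\ldots,x_{i_\ell})$ by setting $\psi(\langle j,p\rangle)$ equal to $\varphi_{\s x}(\langle i_{j},p\rangle)$. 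This gives a concrete relationship between the two oracle sizes: there is a fixed polynomial $q$ (depending only on the constants $i_1,\ldots,i_\ell$ and the pairing function) with $|\varphi_{\s x}|(n)\le |\psi|(q(n))+O(1)$ restricted to the coordinates actually queried by $\mach{M}$, since the only queries ever asked are of the form $\langle i_j,p\rangle$ with $p\le n$ after $n$ steps.

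Given this, $\mach{N}$ on input $(0^n,\psi)$ simulates $\oramach{M}{\varphi_{\s x}}{n}$ step-by-step, intercepting each query $\langle i_j,p\rangle$ of $\mach{M}$ and reissuing it to its own oracle as $\langle j,p\rangle$; by Proposition \ref{prop:coordinates are the same for all input} this is the only type of query $\mach{M}$ ever makes, so the routing table has constant size. Each simulated step of $\mach{M}$ costs $\mach{N}$ only a constant overhead for query rewriting (the first coordinate is one of $\ell$ fixed integers, hardwired into $\mach{N}$), and reading back each oracle response is a direct copy. Hence the running time of $\oramach{N}{\psi}{n}$ is bounded by $c\cdot P(|\varphi_{\s x}|,n)$ for a constant $c$, and substituting the inequality $|\varphi_{\s x}|(m)\le |\psi|(q(m))+O(1)$ into $P$ produces a second-order polynomial $P'(|\psi|,n)$ that bounds $\mach{N}$'s running time. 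This establishes polynomial-time computability of $\phi$.

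The main obstacle is the size-comparison step: one must ensure that the second-order polynomial $P$ remains a second-order polynomial after replacing $|\varphi_{\s x}|$ with a polynomially-bounded function of $|\psi|$. This is where it is essential to use that $\ell$ is a constant and that the coordinates $i_1,\ldots,i_\ell$ are fixed integers independent of the input, so that composing $|\psi|$ with a polynomial $q$ inside $P$ preserves the second-order polynomial form (since second-order polynomials are closed under composition of their first-order argument with ordinary polynomials). Everything else is routine simulation bookkeeping.
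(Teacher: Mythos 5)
Your proposal is correct and matches the paper's intent: the paper states this corollary without a separate proof, treating it as an immediate consequence of the simulation machine $\mathcal{N}$ constructed in the proof of Theorem \ref{thm:Characterizing functions with bounded number of queries}, and your argument is exactly that simulation plus the routine second-order-polynomial bookkeeping (constant query-rewriting overhead, fixed coordinates $i_1,\ldots,i_\ell$, and monotone composition of the time bound with the polynomially related oracle sizes).
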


\subsection{Unbounded Number of Queries}

In this section we show that generally the number of queries accessed by a Turing machine computing a function $f\colon\RN\to\reals$ can be arbitrarily large with respect to the function $f$ or even with respect to single inputs of that function.

\begin{lem}
\label{lem:oracle queries not bounded for single sequence}
There exists a computable function $f\colon\RN\to\reals$ such that $cord_{\s x}^n$ is not bounded for any $\s x \in \RN$.
\end{lem}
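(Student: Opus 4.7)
The plan is to construct an explicit $f$ whose value depends on every coordinate, weighted so that the contribution of coordinate $i$ decays like $2^{-i}$; computing $f(\s x)$ to precision $2^{-n}$ will then force the machine to read coordinates arbitrarily far out as $n$ grows. Concretely, I would take
\[
f(\s x) \;=\; \sum_{i=0}^{\infty} 2^{-i}\, h(x_i),
\]
where $h\colon\reals\to(0,1)$ is a fixed computable function with $\lim_{x\to-\infty}h(x)=0$ and $\lim_{x\to+\infty}h(x)=1$, for example $h(x) = \tfrac{1}{\pi}\arctan(x) + \tfrac{1}{2}$. Since $0<h<1$, the series converges absolutely and uniformly on all of $\RN$, so $f$ is total.

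First I would verify that $f$ is computable. Given an oracle $\varphi_{\s x}$ and a precision $n$, the machine truncates at $N := n+2$ terms; the tail is bounded by $\sum_{i\ge N} 2^{-i} \le 2^{-n-1}$. It then queries each of $x_0,\ldots,x_{N-1}$ through $\varphi_{\s x}$ to a precision fine enough that each $h(x_i)$ can be computed within $2^{-n-2}/N$, and returns the weighted partial sum; the total error is then at most $2^{-n}$.

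The core of the argument is a lower bound on $cord_{\s x}^n$. I claim that for every $\s x\in\RN$, every $n\ge 3$, and every $i$ with $0\le i\le n-3$, the coordinate $i$ must lie in $cord_{\s x}^n$. Assume the contrary for some such $i$. Since $h(\reals)$ accumulates at both $0$ and $1$, we have $\sup_{y\in\reals}|h(y)-h(x_i)| \ge \max\bigl(h(x_i),\,1-h(x_i)\bigr) \ge \tfrac{1}{2}$, so we may pick $y_i\in\reals$ with $|h(y_i)-h(x_i)|>\tfrac{1}{4}$. Define $\s y\in\RN$ by $y_j=x_j$ for $j\ne i$ and $y_i$ as chosen, and take any representation $\varphi_{\s y}$ of $\s y$ that coincides with $\varphi_{\s x}$ on every query actually submitted by $\oramach{M}{\varphi_{\s x}}{n}$. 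This is consistent because all submitted queries concern coordinates $j\ne i$, on which $\s y$ and $\s x$ agree. The computations $\oramach{M}{\varphi_{\s y}}{n}$ and $\oramach{M}{\varphi_{\s x}}{n}$ therefore produce identical transcripts and hence the same output, so correctness of $\mach{M}$ forces $|f(\s x)-f(\s y)|\le 2^{-n+1}$. On the other hand,
\[
|f(\s x)-f(\s y)| \;=\; 2^{-i}\,|h(x_i)-h(y_i)| \;>\; 2^{-i-2},
\]
so $2^{-i-2} < 2^{-n+1}$, i.e.\ $i>n-3$, contradicting $i\le n-3$.

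Consequently $\{0,1,\ldots,n-3\}\subseteq cord_{\s x}^n$ for every $\s x\in\RN$ and every $n\ge 3$, so $\sup_n\max cord_{\s x}^n = \infty$ for every input sequence. The one delicate point, really a bookkeeping matter, is the construction of the adversary representation $\varphi_{\s y}$: one must check that matching $\varphi_{\s x}$ on the queries actually asked extends to a legal regular oracle for $\s y$, which is immediate because all disagreements between $\s x$ and $\s y$ sit at the single coordinate $i$ that is never touched by the computation.
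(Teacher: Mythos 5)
Your proposal is correct and follows essentially the same route as the paper: the paper's witness is $f(\s x)=\sum_k 2^{-k}\alpha(x_k)$ with $\alpha(x)=|x|/(1+|x|)$, which is your construction with a different bounded sigmoid in place of $\arctan$. Your adversary argument showing that any correct machine must query coordinates $0,\ldots,n-3$ at precision $n$ supplies the detail that the paper only asserts (``the machine needs to access the first $n+2$ elements''), so it is, if anything, a more complete write-up of the same idea.
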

\begin{proof}        
Consider the function $\alpha\colon\reals\to\reals$
defined by: $\alpha(x) = \frac{|x|}{1+|x|}$. Then the range of $\alpha$ is $[0,1)$. Now define the function 
$f\colon \RN\to\reals$ as follows:

\begin{myequation}
f(\s x) = \sum_{k=0}^\infty \frac{\alpha(x_k)}{2^k}
\end{myequation}

To compute $f$ up to a precision $2^{-n}$, the oracle machine needs to access the first $(n+2)$ elements of the sequence $\s x$. Hence, $cord_{\s x}^n$ is not bounded with respect to $n$.
\end{proof}

In the previous lemma we see that $cord_f^n$
is the same for fixed $n$ (irrespective of the input sequence $\s x$). The following proposition gives an example of a function $f$ with non-fixed $cord_f^n$.

\begin{prop}
\label{prop:}
There exists a computable function $f\colon\RN\to\reals$ such that:
\begin{enumerate}
\item Given $\s x \in \RN$, $cord_{\s x}^n$ is not fixed with respect to $n$
\item Given $n \in \naturals$, $cord_f^n$ is not well defined
\end{enumerate}
\end{prop}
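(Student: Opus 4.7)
The plan is to define a function whose computation requires accessing different sets of coordinates depending on the value of an early coordinate. Concretely, I propose
$$
f(\s x) = \alpha(x_0) + (1 - \alpha(x_0)) \sum_{k=1}^{\infty} \frac{\alpha(x_k)}{2^k},
$$
where $\alpha(t) = |t|/(1+|t|)$ is the function from Lemma~\ref{lem:oracle queries not bounded for single sequence}, taking values in $[0,1)$. Intuitively, $x_0$ acts as a mixing weight: when $|x_0|$ is very large, $\beta := 1-\alpha(x_0)$ is tiny and the tail contributes negligibly, so an oracle machine needs only to resolve $x_0$; when $x_0 = 0$, the weight $\beta$ is maximal and the full tail must be approximated.

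\textbf{Computability via an adaptive machine.} To approximate $f(\s x)$ at precision $2^{-n}$, the machine first queries $x_0$ at some precision $2^{-(n+c)}$, computes rational bounds on $\alpha(x_0)$, and hence an upper bound $\bar\beta$ on $\beta$. Since the tail sum lies in $[0,1)$, truncating at index $m$ contributes at most $\bar\beta \cdot 2^{-m}$ to the error. The machine therefore picks $m$ just large enough so that $\bar\beta \cdot 2^{-m}$ is below the desired slack, and queries each of $x_1,\ldots,x_m$ at a precision that keeps the accumulated error below $2^{-n}$. If $\bar\beta$ is already tiny enough, the choice $m=0$ works and no tail query is submitted.

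\textbf{Verification of the two conditions.} For condition (1), fix any $\s x \in \RN$. Since $\alpha(x_0) < 1$ strictly, $\beta$ is a positive constant depending on $\s x$, and the threshold above forces $m \ge n + \log_2 \beta$. Hence $|cord_{\s x}^n|$ tends to infinity with $n$, so $cord_{\s x}^n$ cannot be fixed in $n$. For condition (2), fix a large $n$ and compare $\s x$ with $x_0 = 0$ (so $\beta = 1$) against $\s y$ with $y_0 = M$ chosen so that $1-\alpha(M) < 2^{-(n+1)}$: a single query to $y_0$ already pins $f(\s y)$ down to precision $2^{-n}$, so $cord_{\s y}^n = \{0\}$, whereas for $\s x$ the tail carries the full weight of $f$ and many indices $k \ge 1$ must be queried. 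Hence $cord_{\s x}^n \neq cord_{\s y}^n$, showing that $cord_f^n$ is not well-defined.

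\textbf{Main obstacle.} The delicate step is the lower-bound side: justifying that the \emph{minimum}-query machine for $\s x$ with $x_0=0$ genuinely must access many tail coordinates. The plan is to argue by perturbation, in the spirit of Lemmas \ref{lem:cordinates independent from representation of input}--\ref{lem:coordinates are bounded and fixed}: if some machine halted after querying only $\{0,1,\ldots,m\}$ for some $m < n-O(1)$, then one could construct a perturbation $\s x'$ differing from $\s x$ only at coordinate $m+1$ with $\alpha(x'_{m+1})$ close to $1$, changing $f$ by roughly $2^{-(m+1)}$, which exceeds the output tolerance $2^{-n+1}$. This forces $m$ to grow linearly in $n$ for $\s x$ and so confirms the required asymptotic separation between $cord_{\s x}^n$ and $cord_{\s y}^n$.
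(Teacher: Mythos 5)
Your construction is correct and follows essentially the same route as the paper, whose witness is $f(\s x)=\sum_{k\ge 0}\alpha(x_k)\,2^{-(k+|x_0|)}$: in both cases the first coordinate modulates the weight of the tail of the series from Lemma~\ref{lem:oracle queries not bounded for single sequence}, so that the initial segment needed at precision $2^{-n}$ grows with $n$ (condition 1) and, for fixed $n$, depends on $x_0$ (condition 2). Your version merely replaces the exponent shift $|x_0|$ by the multiplicative weight $1-\alpha(x_0)$, and your perturbation argument for the query lower bound spells out what the paper leaves informal.
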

\begin{proof}
Define the function

\begin{myequation}
f(\s x) = \sum_{k=0}^\infty \frac{\alpha(x_k)}{2^{k + |x_0|}}
\end{myequation}

where $\alpha$ is the function defined in Lemma \ref{lem:oracle queries not bounded for single sequence}.
Given a fixed $\s x$, it is clear from Lemma 
\ref{lem:oracle queries not bounded for single sequence} that $cord_{\s x}^n$ is not bounded with respect to $n$. Now fix some $n \in \naturals$, then it is clear that the length of the initial segment of the input sequence queried by the machine depends on input sequence itself, in particular, on the first element of the sequence. Hence, $cord_f^n$ is not fixed and therefore not well defined.

\end{proof}

\bibliographystyle{abbrv}
\bibliography{references}

\begin{thebibliography}{10}

\bibitem{BHWTutorial}
V.~Brattka, P.~Hertling, and K.~Weihrauch.
\newblock {A Tutorial on Computable Analysis}.
\newblock In S.~B. Cooper, B.~L¡§owe, and A.~Sorbi, editors, {\em {New
  Computational Paradigms}}, pages 425--491. Springer, 2008.

\bibitem{FGHQuery}
H.~F\'{e}r\'{e}e, W.~Gomaa, and M.~Hoyrup.
\newblock {On the Query Complexity of Real Functionals}.
\newblock In {\em {The 28th Annual ACM/IEEE Symposium on Logic in Computer
  Science (LICS 2013)}}, pages 103--112, 2013.

\bibitem{GAlgebraic}
W.~Gomaa.
\newblock {Algebraic Characterizations of Computable Analysis Real Functions}.
\newblock {\em {International Journal of Unconventional Computing}},
  7(4):245--272, 2011.

\bibitem{GComputable}
A.~Grzegorczyk.
\newblock Computable functionals.
\newblock {\em {Fundamenta Mathematicae}}, 42:168--202, 1955.

\bibitem{KLipschitz}
A.~Kawamura.
\newblock {Lipschitz Continuous Ordinary Differential Equations are
  Polynomial-Space Complete}.
\newblock {\em Computational Complexity}, 19(2):305--332, 2010.

\bibitem{KCComplexity}
A.~Kawamura and S.~Cook.
\newblock {Complexity Theory for Operators in Analysis}.
\newblock {\em {ACM Transactions on Computation Theory}}, 4(2), 2012.

\bibitem{KORZComputational}
A.~Kawamura, H.~Ota, C.~R{\"o}snick, and M.~Ziegler.
\newblock {Computational Complexity of Smooth Differential Equations}.
\newblock In B.~Rovan, V.~Sassone, and P.~Widmayer, editors, {\em {Proceedings
  of Mathematical Foundations of Computer Science 2012}}, volume 7464 of {\em
  {Lecture Notes in Computer Science}}, pages 578--589. Springer, 2012.

\bibitem{KComplexity}
K.-I. Ko.
\newblock {\em {Complexity Theory of Real Functions}}.
\newblock Birkh{\"a}user, 1991.

\bibitem{KFComputational}
K.-I. Ko and H.~Friedman.
\newblock {Computational Complexity of Real Functions}.
\newblock {\em Theoretical Computer Science}, 20:323--352, 1982.

\bibitem{LExtension}
D.~Lacombe.
\newblock {Extension de la Notion de Fonction R{\'e}cursive aux Fonctions d'une
  ou Plusieurs Variables R{\'e}elles {III}}.
\newblock {\em {Comptes Rendus de l'Acad{\'e}mie des sciences Paris}},
  241:151--153, 1955.

\bibitem{SSpaces}
{M. Schr\"oder}.
\newblock {Spaces Allowing Type-2 Complexity Theory Revisited}.
\newblock {\em Mathematical Logic Quarterly}, 50(4-5):443--459, 2004.

\bibitem{PFoundations}
C.~W. Patty.
\newblock {\em {Foundations of Topology}}.
\newblock {Jones \& Bartlett Publishers}, 2 edition, 2008.

\bibitem{TComputable}
A.~Turing.
\newblock {On Computable Numbers, With an Application to the
  Entscheidungsproblem}.
\newblock {\em Proceedings of the {L}ondon {M}athematical {S}ociety},
  2(42):230--265, 1936.
\newblock (correction ibid. 43, pp 544-546, 1937).

\bibitem{WComputable}
K.~Weihrauch.
\newblock {\em {Computable Analysis: An Introduction}}.
\newblock Springer, 1 edition, 2000.

\bibitem{WComputational}
K.~Weihrauch.
\newblock {Computational Complexity on Computable Metric Spaces}.
\newblock {\em Mathematical Logic Quarterly}, 49(1):3--21, 2003.

\end{thebibliography}

\end{document}